\documentclass[aps,pra,twocolumn,superscriptaddress,floatfix,nofootinbib,showpacs,longbibliography]{revtex4-1}
\pdfoutput=1
\usepackage[utf8]{inputenc}  
\usepackage[T1]{fontenc}     
\usepackage[british]{babel}  
\usepackage[colorlinks=true, citecolor=blue, urlcolor=blue]{hyperref}  
\usepackage{graphicx} 
\usepackage[babel]{microtype}  
\usepackage{amsmath,amssymb,amsthm,bm,amsfonts,mathrsfs,bbm} 

\usepackage{xspace}  
\usepackage{pgf,tikz}
\usepackage{xcolor}
\usepackage{multirow}
\usepackage{array}
\usepackage{bigstrut}
\usepackage{braket}
\usepackage{color}
\usepackage{natbib}
\usepackage{multirow}
\usepackage{mathtools}
\usepackage[normalem]{ulem}
\usepackage{float}
\usepackage[caption = false]{subfig}
\usepackage{xcolor,colortbl}
\usepackage{color}
\usepackage{tikz}
\usetikzlibrary{positioning,arrows.meta,calc}
\usepackage{amssymb}
\newcommand{\Tr}{\operatorname{Tr}}

\newcommand{\be}{\begin{equation}}
\newcommand{\ee}{\end{equation}}
\newcommand{\ba}{\begin{eqnarray}}
\newcommand{\ea}{\end{eqnarray}}
\newcommand{\ketbra}[2]{|#1\rangle \langle #2|}

\newtheorem{theorem}{Theorem}
\newtheorem{corollary}{Corollary}

\newtheorem{proposition}{Proposition}

\newtheorem{lemma}{Lemma}
\newenvironment{theorem'}
 {\expandafter\def\expandafter\thetheorem\expandafter{\thetheorem'}\theorem}
 {\endtheorem}
\usepackage{mathtools}

\begin{document}

\title{
Free encoding capacity: A universal unit for quantum resources}
\author{Shampa Mondal}
\affiliation{Physics and Applied Mathematics Unit, Indian Statistical Institute, Kolkata, India}

\author{Soumajit Das}
\affiliation{Physics and Applied Mathematics Unit, Indian Statistical Institute, Kolkata, India}

\author{Preeti Parashar}
\affiliation{Physics and Applied Mathematics Unit, Indian Statistical Institute, Kolkata, India}

\author{Tamal Guha}
\affiliation{Mathematical Institute, Slovak Academy of Sciences, Bratislava, Slovakia}

\begin{abstract}
A perfect \(d\)-dimensional quantum channel can convey \(\log d\)-bits of classical information by encoding messages in \(d\) orthogonal quantum states. Alternatively, for every quantum state at the senders end, there exist \(d\) encoding operations which produce \(d\)-orthogonal quantum states. Transmitting which via a \(d\)-level perfect quantum channel it is possible to communicate \(\log d\)-bits of classical information. But what if the set of encoding operations is restricted only within a physically constrained class? Here, we consider such a class of encoding operations to be the set of free operations for any quantum resource theory and show that the constrained capacity --- namely, the \textit{free encoding capacity} (FEC) emerged as a unit of the corresponding quantum resource. Moreover, we show that for the \textit{pointed} resource theories --- a resource theory admitting only a single free state --- FEC becomes a faithful resource measure also. We also discuss the implications of FEC in the question of resource-theoretic state transformations and the possibility of extending its faithfulness for general quantum resource theories. 
\end{abstract}
\maketitle
\textit{Introduction.}-- The strength of quantum systems in classical information processing tasks relies on the non-classical elements, often referred to as the \textit{resources}, possessed by the theory itself. For instance, by exploiting entanglement, quantum systems remarkably enhance the information processing ability of its classical counterpart \cite{bennett1992communication}. Such an enhancement indeed  persists even when the quantum transmission lines are highly affected by noise \cite{bennett1999entanglement, chiribella2025communication}. Despite these advantages, quantum transmission lines face several no-go results in classical information processing \cite{shor2002additivity, king2001capacity, bennett2002entanglement, mozes2005deterministic, shirokov2012conditions}. Interestingly, by limiting the resource consumption to encode classical information in quantum states -- specifically with limited entanglement \cite{zhu2017superadditivity}, or limited energy \cite{mondal2025classical} -- one can overcome some of these no-go results further. Constraining the encoding resources was traditionally encountered within the continuous variable quantum systems to bypass the diverging physical quantities involved with the infinite dimensions \cite{braunstein2000dense, ban2000quantum, holevo2001evaluating, sen2005capacities}. Nevertheless, these suggest an inherent connection between the generic quantum resources and their information-theoretic implications. However, a concrete formalism for arbitrary quantum resources remains unexplored. An arbitrary quantum resource theory (QRT), in a nutshell, is characterized in terms of free states, a set of states possessing no resource content, and a class of operations -- the free operations -- employing which one cannot increase the amount of resource possessed by any quantum state \cite{chitambar2018quantum}. In principle, every operational signature of quantum theory -- ranging from nonlocality to the computational advancements -- bearing no or a less efficient classical counterpart can be regarded as a quantum resource and accordingly possesses a resource-theoretic framework \cite{horodecki2009quantum, brandao2013resource, veitch2014resource, streltsov2017structure, wu2021operational, ghosh2024quantum}.

Within such a notion of QRT, here we consider an elegant classical information processing scenario where the sender (say, Alice) is only allowed to implement the free operations of the corresponding QRTs to encode her classical message on a given quantum state and send it to the receiver (say, Bob). This leads to a restricted classical capacity of the given quantum state under the shared quantum channel, namely the \textit{free encoding capacity} (FEC). In our analysis, we have considered the transmission line -- carrying the classical information encoded in quantum states -- to be a perfect one, ensuring no resource consumption during the communication process itself. Precisely, while a perfect quantum channel can be considered as a maximally resourceful element in the context of classical information processing, from the perspective of QRT it is indeed a free channel (operation). This is because under the perfect quantum channel, i.e., an identity operation, the resource content of a given quantum system remains invariant. 
As a dual of the present communication scenario, one may consider a situation where Alice cannot directly access the quantum systems to encode her classical information rather is allowed to configure the transmission line she shares with Bob. For instance, in an optical-fiber based communication system, Alice -- without having access to the single photon generator directly -- can rearrange the retarder plates, polarizers, analyzers etc. to tune the polarization of the transmitting photon in accordance with the input classical information. Such a dual-encoding picture, from the perspective of quantum thermodynamics, emerged as the Helmholtz free energy of input quantum memory \cite{narasimhachar2019quantifying}. Nevertheless, here we show that the FEC emerges as a bounded, strongly monotonic and convex resource measure for any arbitrary QRT. In fact, to achieve the FEC in the asymptotic limit one may restrict within the set of extreme free operations only. This renders a universal resource unit for any QRT, arising from a seemingly uncorrelated direction of classical information processing. Additionally, the strong monotonic nature of such a resource measure then provides an upper bound on the probability of freely transforming a given quantum resource to the other. Moreover, for any arbitrary quantum resource theories with a single free state, the FEC becomes a faithful measure also. Such a class of QRT, namely the \textit{pointed} ones, often appeared in the realm of quantum information theory \cite{PhysRevA.67.062104, GOUR20151}, quantum thermodynamics \cite{brandao2015second} and in the identification of unknown quantum operations \cite{parisio2024quantum}. Finally, we give a prescription to redefine FEC to be a faithful resource measure for some specific classes, along with characterizing the structure of QRTs where it fails.

\textit{Framework.}-- Any arbitrary QRT \(\mathcal{R}\), motivated by an operational task, characterizes a class of states \(\rho\in\mathcal{D}(\mathcal{H}_d)\) for which the operational utility is zero or no better than its classical counterpart (\(\mathcal{D}(\mathcal{H}_d)\)denotes the set of density matrices over the Hilbert space \(\mathcal{H}_d\)). This set of states, denoted as \(\mathcal{F}_{\mathcal{R}}\), is convex in general, with some exceptions in the context of continuous variable systems \cite{turner2025all}, quantum correlations, and quantum dynamics \cite{salazar2024quantum}. By denoting an arbitrary utility function --- often referred to as the resource measure --- as \(\mathcal{U}_{\mathcal{R}}\), we can identify 
\[\mathcal{F}_{\mathcal{R}}:=\{\sigma\in\mathcal{D}(\mathcal{H}_d)|\mathcal{U}_{\mathcal{R}}(\sigma)=0\}.\]
The proper characterization of the set \(\mathcal{F}_{\mathcal{R}}\) then naturally asks about the stability of the set \(\mathcal{F}_{\mathcal{R}}\) under quantum operations. Accordingly, a set of quantum channels, \textit{completely positive trace preserving} (CPTP) maps over the linear operators \(\mathcal{L}(\mathcal{H}_d)\), can be characterized for which the set \(\mathcal{F}_{\mathcal{R}}\) remains invariant -- namely the free operations \(\mathbf{N}_{f}^{\mathcal{R}}\). Mathematically, it is defined as
\[\mathbf{N}_f^{\mathcal{R}}:=\{\mathcal{N}|\mathcal{N}(\sigma)\in\mathcal{F}_{\mathcal{R}},\forall \sigma\in\mathcal{F}_{\mathcal{R}}\}.\]
As an illustrative example, one could consider the resource theory of athermality \cite{brandao2013resource} -- motivated by the operational premise of extracting thermodynamic work from a quantum system \(\rho\in\mathcal{D}(\mathcal{H}_d)\), in terms of the Helmholtz free energy \(F_{\beta}(\rho)\) with respect to the given thermal bath of inverse temperature \(\beta\). The free state, in this case, is a thermal state (\(\tau_{\beta}\)) of the bath and the set of Gibbs-preserving operations \cite{faist2015gibbs}, which leave \(\tau_{\beta}\) invariant, can be characterized as free operations. One could, in principle, define an even stronger class of free operations whose Stinespring realizations involves only the free state \(\tau_{\beta}\) of the environment as an ancillary system. For the aforementioned resource theory of athermality the thermal operations are precisely such a stronger class of free operations. 

The resource theory of athermality belongs to a special class of QRTs, for which there is a single free state and here we termed them as \textit{pointed} QRTs. Besides their mathematical elegance, pointed QRTs possess several salient features such as existence of resource-destroying maps \cite{liu2017resource}, the precise one-shot distillable and dilution costs \cite{PhysRevLett.123.020401}. However, for such QRTs a standard notion of resource measure, namely the relative entropy with respect to the free state often fails to attain finiteness. For instance, recently a significant interest has been devoted to identifying non-uniformity of quantum state for a given basis -- namely quantum \textit{texture} \cite{parisio2024quantum} -- ranging its implications from field-atom interactions \cite{huang2025quantum} to quantum phase transition \cite{patra2025rolequantumstatetexture}. While there are various proposed measures to quantify the amount of texture a quantum state possesses \cite{zhang2025quantum, wang2025quantifying}, a central problem is that since the unique free state has rank one, QRT of texture does not admit traditional relative entropy as a finite measure of resource. At this juncture, we will now introduce a new way to measure such resources, which can be faithfully integrated for any arbitrary forthcoming pointed resource-theoretic signature of quantum theory.

Switching over the premises, we will now revisit the notion of elegant classical information processing via quantum transmission lines, a.k.a., quantum channels. Viewing any localized quantum operation in a time-like setting gives rise to a quantum channel. In general, quantum channels are CPTP maps acting between the linear operators over two different Hilbert spaces \(\mathcal{L}(\mathcal{H}_{d_1})\to\mathcal{L}(\mathcal{H}_{d_2})\); however, in our analysis it is sufficient to restrict within those acting between the operators of the isomorphic Hilbert spaces, i.e.,  \(\mathcal{H}_{d_1}\simeq\mathcal{H}_{d_2}\). This is because, from the resource-theoretic perspective, a free state of a given dimension may not be free in general for any arbitrary dimension. For instance, consider the maximally mixed qubit state; while the state is free for two-dimensional resource theory, it is no longer free for higher dimensions. 

Consider a classical information processing task, where the sender Alice possesses a \(d\)-level quantum state \(\rho\in\mathcal{D}(\mathcal{H}_d)\), along with an arbitrarily constrained set of CPTP maps \(\mathbf{N}_{R}\) as the allowed operations to encode classical information \(X\equiv\{x_i\}_{i=0}^{n-1}\) on \(\rho\). On the other hand, Bob can perform an arbitrary POVM \(\mathcal{M}\equiv\{M_y|M_y\geq0,~\sum_y M_y=\mathbb{I}_d\}\) on each of the individual transmitted quantum states and accordingly decode the classical random variable \(\{y_j\}_{j=0}^{n-1}\equiv Y\). Then the restricted classical capacity of the \(d\)-level quantum state \(\rho\) can be written as
\begin{align}\label{e1}
    \mathcal{C}^{(1)}_R(\rho)=\max_{p_x, \mathcal{N}_x\in\mathbf{N}_{R}, \mathcal{M}}I(X:Y),
\end{align}
where, \(I(X:Y)=H(X)+H(Y)-H(XY)\) is the Mutual Information between the random variables \(X\) and \(Y\), with \(H(Z)=-\sum_zp_z\log_2 p_z\) is the standard Shannon entropy of the random variable \(Z\) following a probability distribution \(\{p_z\}_z\) (from now on, we will use "\(\log\)" to mean logarithm with base \(2\) and hence all the information quantities will be measured in terms of "bits"). Within such a restricted set of encoding operations, one could define the restricted classical capacity of the perfect \(d\)-level quantum channel (\(\mathcal{I}_d\)) as 
\begin{align*}
    \mathcal{C}^{(1)}_{R}(\mathcal{I}_d)=\max_{\rho\in\mathcal{L}(\mathcal{H}_d)}\mathcal{C}^{(1)}_{R}(\rho).
\end{align*}
Nevertheless, here we will only focus on the restricted classical capacity of the given quantum state, instead of the channel.

Notably, for both capacities, the superscript \((1)\) denotes the simplest possible classical communication set-up involving product encoding on Alice's side and product decoding (i.e., the measurement on each particle individually) on Bob's side. However, if Bob performs a joint measurement on all the quantum states he obtained, then the regularized classical capacity of the quantum state \(\rho\) becomes the Holevo information of the ensemble \(\{p_x,\mathcal{N}_x(\rho)\}\) maximized over \(\{p_x, \mathcal{N}_x\in\mathbf{N}_R\}\), that is,
\begin{align}\label{e2}
\nonumber\mathcal{C}_R(\rho)&=\lim_{n\to\infty}\frac1n\mathcal{C}^{(n)}_R(\rho^{\otimes n})\\\nonumber&=\max_{p_x,\mathcal{N}_x\in\mathbf{N}_R}\chi(\sum_xp_x\ketbra{x}{x}\otimes\mathcal{N}_x(\rho))
\\\nonumber&=\max_{p_x,\mathcal{N}_x\in\mathbf{N}_R} [S(\sum_xp_x\mathcal{N}_x(\rho))-\sum_x p_xS(\mathcal{N}_x(\rho))]
\\&=\max_{p_x,\mathcal{N}_x\in\mathbf{N}_R}\mathcal{C}_{R}(\rho,\{p_x,\mathcal{N}_x\}),
\end{align}
where, \(\mathcal{S}(\sigma)=-\Tr(\sigma \log \sigma)\) is the von Neumann entropy of the quantum state \(\sigma\) and \(\chi\) denotes the Holevo information of a classical-quantum state. In the rest of the paper, we will use the notation \(\mathcal{C}_R(\rho, \{p_x,\mathcal{N}_x\})=\chi(\sum_xp_x\ketbra{x}{x}\otimes\mathcal{N}_x(\rho))\) to denote the communication utility of a quantum state \(\rho\), under the free encoding ensemble \(\{p_x,\mathcal{N}_x\}\).

In particular, when  the set of encodings \(\mathbf{N}_R\) is limited within the set of free operations (\(\mathbf{N}_{\mathcal{R}}\)) of an arbitrary QRT \(\mathcal{R}\) (we change the subscript \(R\) to \(\mathcal{R}\) corresponding to the QRT \(\mathcal{R}\)), then we will call the quantity \(\mathcal{C}_{\mathcal{R}}(\rho)\) as the \textit{Free Encoding Capacity} (FEC) of the given quantum state \(\rho\). 

\textit{Main Results.}-- With all the preliminaries, we have discussed so far, we can now simply state our main result in terms of the following theorem.
\begin{theorem}\label{t1}
    For any pointed resource theory \(\mathcal{R}\) and for any arbitrary quantum state \(\rho\), the free encoding capacity \(\mathcal{C}_{\mathcal{R}}(\rho)\) is a bounded, strongly monotonic, convex and faithful measure of the resource.
\end{theorem}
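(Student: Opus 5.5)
The plan is to prove the four properties in the order: boundedness, faithfulness, convexity, strong monotonicity. Each property follows from a standard property of the Holevo quantity $\chi$ together with two structural facts about free operations: composition closure, and the presence of the identity and the replacement channel. Throughout, $\sigma_{\mathcal{R}}$ denotes the unique free state of the pointed theory $\mathcal{R}$.

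\emph{Boundedness.} For any free encoding ensemble $\{p_x,\mathcal{N}_x\}$,
\[
\mathcal{C}_{\mathcal{R}}(\rho,\{p_x,\mathcal{N}_x\})\le S\Big(\sum_x p_x\mathcal{N}_x(\rho)\Big)\le\log d .
\]
The first inequality holds because the subtracted term in Eq.~(\ref{e2}) is nonnegative. Hence $0\le\mathcal{C}_{\mathcal{R}}(\rho)\le\log d$.

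\emph{Faithfulness.} First suppose $\rho=\sigma_{\mathcal{R}}$. Every $\mathcal{N}\in\mathbf{N}_f^{\mathcal{R}}$ must map $\sigma_{\mathcal{R}}$ into $\mathcal{F}_{\mathcal{R}}=\{\sigma_{\mathcal{R}}\}$, so $\mathcal{N}_x(\rho)=\sigma_{\mathcal{R}}$ for every $x$. All signal states then coincide, and $\chi=0$. Conversely, suppose $\rho\neq\sigma_{\mathcal{R}}$. Both the identity channel and the replacement channel $\mathcal{E}(X)=\Tr(X)\,\sigma_{\mathcal{R}}$ are free, since each maps $\sigma_{\mathcal{R}}$ to itself. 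Encode $x=0$ by the identity and $x=1$ by $\mathcal{E}$, each with probability $1/2$. Bob then receives the two distinct states $\rho$ and $\sigma_{\mathcal{R}}$. The Holevo quantity of two distinct states with nonzero weights is strictly positive, because $\sum_x p_x D(\rho_x\|\bar\rho)=0$ forces all $\rho_x=\bar\rho$. Therefore $\mathcal{C}_{\mathcal{R}}(\rho)>0$.

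\emph{Convexity.} Let $\rho=\sum_i q_i\rho_i$, and fix an optimal free ensemble $\{p_x,\mathcal{N}_x\}$ for $\rho$. Write
\[
\chi=\sum_x p_x D\big(\mathcal{N}_x(\rho)\,\big\|\,\textstyle\sum_{x'}p_{x'}\mathcal{N}_{x'}(\rho)\big).
\]
Linearity of the channels turns both arguments of $D$ into $q$-mixtures of the corresponding terms built from the $\rho_i$. Joint convexity of the relative entropy then gives $\chi\le\sum_i q_i\,\mathcal{C}_{\mathcal{R}}(\rho_i,\{p_x,\mathcal{N}_x\})\le\sum_i q_i\,\mathcal{C}_{\mathcal{R}}(\rho_i)$.

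\emph{Monotonicity.} Plain monotonicity is immediate. If $\Lambda$ is free, then every $\mathcal{N}_x\circ\Lambda$ is free, so any ensemble for $\Lambda(\rho)$ yields an ensemble for $\rho$, and $\mathcal{C}_{\mathcal{R}}(\Lambda(\rho))\le\mathcal{C}_{\mathcal{R}}(\rho)$.

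For strong monotonicity, take a free instrument $\{\Lambda_k\}$ with $p_k=\Tr\Lambda_k(\rho)$ and $\rho_k=\Lambda_k(\rho)/p_k$. The goal is $\sum_k p_k\,\mathcal{C}_{\mathcal{R}}(\rho_k)\le\mathcal{C}_{\mathcal{R}}(\rho)$. For each $k$, pick an optimal free ensemble $\{p_x,\mathcal{N}^{(k)}_x\}$ for $\rho_k$; padding lets us use a common label set. Then define the candidate encoding
\[
\mathcal{N}_x:=\sum_k\mathcal{N}^{(k)}_x\circ\Lambda_k .
\]
This map is CPTP. It is free, because on $\sigma_{\mathcal{R}}$ each normalized branch $\Lambda_k(\sigma_{\mathcal{R}})/\Tr\Lambda_k(\sigma_{\mathcal{R}})$ equals $\sigma_{\mathcal{R}}$ (the instrument is free and the theory is pointed), and each $\mathcal{N}^{(k)}_x$ fixes $\sigma_{\mathcal{R}}$.

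\emph{Main obstacle.} The difficulty is that Bob receives $\sum_k p_k\mathcal{N}^{(k)}_x(\rho_k)$ with the branch label $k$ erased. Mixing ensembles can decrease $\chi$ below the average $\sum_k p_k\chi_k$, so the inequality does not follow directly.

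My first attempt is to keep $k$ as a classical flag. I would work with the free instrument in its flagged form $\rho\mapsto\sum_k\Lambda_k(\rho)\otimes|k\rangle\langle k|$, where the flag register is treated as free. Then $\chi(X:BK)=\sum_k p_k\chi_k$ holds exactly, because the flag is classical and $X$ is independent of $K$. The remaining step is to show that this flag can be absorbed into a $d$-dimensional free encoding without reducing $\chi$.

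If that absorption fails, the fallback is to read strong monotonicity in the flagged, data-processing sense. In that reading, the inequality follows from data processing of $\chi$ together with the closure of free operations under composition. This step, rather than the other three properties, is where I expect the real work of the proof to lie.
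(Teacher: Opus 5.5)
Your arguments for boundedness, faithfulness, convexity and plain monotonicity are correct and coincide with the paper's Propositions~\ref{p1}, \ref{p5}, \ref{p4} and \ref{p2}. Your bound $\chi\le S(\overline{\rho})\le\log d$ is also more direct than the paper's appeal to the Holevo bound.

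The genuine gap is strong monotonicity, which you never establish. The step you leave open is absorbing the flag $k$ into a $d$-dimensional free encoding of $\rho$ without losing $\chi$. That step is exactly the content of the inequality. Your fallback changes the statement: a flagged quantity in which Bob also holds $k$ is not $\mathcal{C}_{\mathcal{R}}(\rho)$ as defined in Eq.~(\ref{e2}).

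The paper follows your first route. It builds the same flagged protocol and computes $\chi(\tau_{qc})=H(p_k)+\sum_kp_k\mathcal{C}_{\mathcal{R}}(\sigma_k^{\rho})$. It then asserts $\mathcal{C}_{\mathcal{R}}(\rho)+H(p_k)=\mathcal{I}_{\max}(\rho)\ge\chi(\tau_{qc})$, on the grounds that only free operations and a classical channel are used. That is your absorption step in disguise, and it is unjustified. $\mathcal{I}_{\max}$ is derived for deterministic free encodings $\mathcal{N}_x(\rho)$ plus an independent classical message. In $\tau_{qc}$, by contrast, the branch states $\sigma_k^{\rho}$ are reached only probabilistically and $k$ is correlated with them.

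In fact the gap cannot be closed, because strong monotonicity fails for some pointed theories. Take a qubit whose unique free state is $\ketbra{0}{0}$ (unitarily equivalent to qubit texture), with all channels fixing it as free operations.

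\emph{The value at $\mathbb{I}/2$.} Every free image of $\mathbb{I}/2$ is $\frac12\ketbra{0}{0}+\frac12\tau$ with $\tau$ an arbitrary state, and pure $\tau$ suffice (as in Theorem~\ref{t2}). If $\tau$ has Bloch vector $n$, the output entropy is $g(n_z)=h\bigl(\tfrac12+\tfrac12\sqrt{(1+n_z)/2}\bigr)$, with $h$ the binary entropy. Set $q=\frac12(1+\sum_xp_xn_{x,z})$; the average state then has Bloch $z$-component $q$. Since $g$ is concave with $g(-1)=1$ and $g(1)=0$, one gets $\chi\le h\bigl(\tfrac{1+q}{2}\bigr)-(1-q)$. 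Hence $\mathcal{C}_{\mathcal{R}}(\mathbb{I}/2)=h(0.8)-0.4\approx0.322$.

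\emph{The instrument.} Use the Kraus operators $L_0=\sqrt{1-\eta}\,\ketbra{0}{0}$ and $L_1=\sqrt{\eta}\,\ketbra{0}{0}+\ketbra{1}{1}$. Each maps $\ketbra{0}{0}$ to a positive multiple of itself, and $\sum_kL_k(\cdot)L_k^{\dagger}$ fixes $\ketbra{0}{0}$. On $\mathbb{I}/2$ they produce $\ketbra{0}{0}$ with probability $(1-\eta)/2$. They produce $\sigma_1\propto\eta\ketbra{0}{0}+\ketbra{1}{1}$ with probability $(1+\eta)/2$. Encoding $\sigma_1$ with the replacement channel onto $\ketbra{0}{0}$ and the computational-basis dephasing gives $\mathcal{C}_{\mathcal{R}}(\sigma_1)\to1$ as $\eta\to0$.

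Hence $\sum_kp_k\mathcal{C}_{\mathcal{R}}(\sigma_k)\to\frac12>0.322=\mathcal{C}_{\mathcal{R}}(\mathbb{I}/2)$. This contradicts Proposition~\ref{p3} and Corollary~\ref{c1}.

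What survives is the rest of your proposal. Strong monotonicity can only be rescued under extra assumptions. For example, for a full-rank free state, single Kraus operators with $L\rho_FL^{\dagger}\propto\rho_F$ and $\sum_kL_k^{\dagger}L_k=\mathbb{I}$ are forced to be multiples of unitaries commuting with $\rho_F$, and then the inequality holds with equality.
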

The proof of Theorem \ref{t1} can be seen as the concatenations of the following propositions. In particular, we will now show that the quantity \(\mathcal{C}_{\mathcal{R}}(\rho)\) respects all the individual features (boundedness, monotonicity, strong monotonicity under stochastically free operations, convexity, and faithfulness) for any arbitrary pointed QRT \(\mathcal{R}\). Moreover, the first three features indeed hold for any arbitrary QRT \(\mathcal{R}\). Before going to the proof of the theorem, we derive an elegant and significant corollary from it.
\begin{corollary}\label{c1}
    If \(\Pr(\rho\xrightarrow[\mathcal{R}]{}\sigma)\) denotes the probability of transforming an arbitrary quantum state \(\rho\) to another state \(\sigma\), under the stochastic free operations allowed in the QRT \(\mathcal{R}\), then
    \[\Pr(\rho\xrightarrow[\mathcal{R}]{}\sigma)\leq \min\Bigl\{ \frac{\mathcal{C}_{\mathcal{R}}(\rho)}{\mathcal{C}_{\mathcal{R}}(\sigma)}, 1\Bigr\}\]
\end{corollary}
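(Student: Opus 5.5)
The plan is to derive Corollary~\ref{c1} directly from the strong monotonicity part of Theorem~\ref{t1}, which we take as established (in particular, the paper asserts that this property holds for arbitrary QRTs). First we fix what a stochastic free operation is: a family of CP, trace-non-increasing maps \(\{\mathcal{E}_k\}_k\) with \(\sum_k\mathcal{E}_k\) trace preserving, such that each normalised branch \(\mathcal{E}_k(\sigma)/\Tr\mathcal{E}_k(\sigma)\) is free whenever \(\sigma\in\mathcal{F}_{\mathcal{R}}\). Strong monotonicity of the FEC is then the statement
\begin{align*}
\mathcal{C}_{\mathcal{R}}(\rho)\geq \sum_k p_k\,\mathcal{C}_{\mathcal{R}}(\rho_k),
\end{align*}
where \(p_k=\Tr\mathcal{E}_k(\rho)\) and \(\rho_k=\mathcal{E}_k(\rho)/p_k\).

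Next we write \(\Pr(\rho\xrightarrow[\mathcal{R}]{}\sigma)\) as the supremum, over all such instruments, of the total probability \(p_{\rm succ}=\sum_{k\in K}p_k\) of the branches \(K\) whose outputs equal \(\sigma\). Take any such protocol. Because \(\mathcal{C}_{\mathcal{R}}\) is a Holevo quantity, it is non-negative, so the failure branches can be discarded:
\begin{align*}
\mathcal{C}_{\mathcal{R}}(\rho)\geq\sum_{k\in K}p_k\,\mathcal{C}_{\mathcal{R}}(\sigma)=p_{\rm succ}\,\mathcal{C}_{\mathcal{R}}(\sigma).
\end{align*}
If \(\mathcal{C}_{\mathcal{R}}(\sigma)>0\), dividing gives \(p_{\rm succ}\leq\mathcal{C}_{\mathcal{R}}(\rho)/\mathcal{C}_{\mathcal{R}}(\sigma)\). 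Taking the supremum over protocols, and combining with the trivial bound that a probability is at most \(1\), yields the minimum in Corollary~\ref{c1}.

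The main obstacle is the degenerate case \(\mathcal{C}_{\mathcal{R}}(\sigma)=0\), where the ratio is undefined. We read it as \(+\infty\), so the bound reduces to the trivial \(\Pr\leq1\). This convention is consistent, and for pointed theories faithfulness makes \(\sigma\) free in this case, so it can indeed be prepared with certainty by a replacement channel.

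A secondary point is whether strong monotonicity, as stated in the paper, is proved under the same notion of stochastic free operation used here, i.e.\ with each branch individually mapping free states to free states. If the paper's version instead uses selective free instruments, we will simply adopt that notion in defining \(\Pr\), and the argument is unchanged.
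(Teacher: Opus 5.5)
Your argument is correct and is essentially the paper's. The paper also invokes the strong monotonicity of Proposition~\ref{p3}, writing $\mathcal{E}(\rho)=(1-p_k)\tau^{\rho}+p_k\sigma_k^{\rho}$ and keeping only the $p_k\,\mathcal{C}_{\mathcal{R}}(\sigma_k^{\rho})$ term, then divides and caps at $1$. Your version is slightly cleaner in two respects: you bound the total success probability $\sum_{k\in K}p_k$ over every branch that yields $\sigma$, not a single Kraus branch, and you state the $\mathcal{C}_{\mathcal{R}}(\sigma)=0$ convention that the paper leaves implicit.
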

\noindent
We defer the proof to Appendix \ref{pc1}. Note that such an upper-bound for stochastic state transformation probabilities has been individually encountered in various resource-theoretic contexts earlier \cite{baumgratz2014quantifying, wu2020quantum, wu2021resource}.

\begin{figure}[t]
    \centering
    \includegraphics[width=1.0\linewidth]{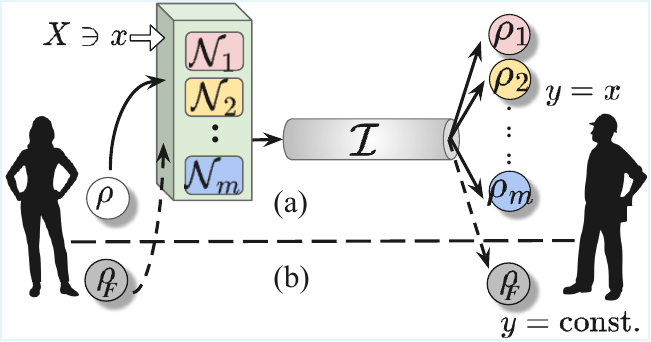}
    \caption{(Color online) \textit{The FEC for pointed QRTs.} (a)  In a pointed QRT \(\mathcal{R}\), Alice can use a resource state \(\rho\) (the clean, white ball) to encode classical information \(x\in X\) by applying different free operations (different coloring), and by decoding it (seeing the color of the ball), Bob can obtain \(y\) (the same color here). (b) On the other hand, the free state \(\rho_F\) (the muddy, gray ball) is unable to communicate any information by remaining invariant (the same gray color) under all possible free operations (different coloring).}
    \label{f0}
\end{figure}

Let us begin with the following lemma, which will be instrumental for the rest of the analysis.
\begin{lemma}\label{l1}
    For any QRT \(\mathcal{R}\), Eq. (\ref{e2}) can be rewritten as
    \begin{align}\label{e3}
    \mathcal{C}_{\mathcal{R}}(\rho)=\max_{p_x,\mathcal{N}_x\in\mathbf{N}_{\mathcal{R}}}\sum_xp_x D(\mathcal{N}_x(\rho)||\overline{\rho}),
    \end{align}
    where, \(\overline{\rho}=\sum_x p_x\mathcal{N}_x(\rho)\) and \(D(\rho||\sigma)\) is the relative entropy between two quantum states \(\rho\) and \(\sigma\).
\end{lemma}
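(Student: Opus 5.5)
The plan is to prove the identity ensemble by ensemble, and then take the maximum on both sides. Fix any finite ensemble $\{p_x,\mathcal{N}_x\}$ with $\mathcal{N}_x\in\mathbf{N}_{\mathcal{R}}$, and write $\rho_x:=\mathcal{N}_x(\rho)$ and $\overline{\rho}=\sum_x p_x\rho_x$. Since both Eq. (\ref{e2}) and Eq. (\ref{e3}) maximize over the same set of ensembles, it suffices to show the pointwise equality
\[
\mathcal{C}_{\mathcal{R}}(\rho,\{p_x,\mathcal{N}_x\})=\sum_x p_x D(\rho_x\|\overline{\rho}),
\]
and then take the maximum on both sides.

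For the pointwise identity I will expand the right-hand side using $D(\rho_x\|\overline{\rho})=\Tr(\rho_x\log\rho_x)-\Tr(\rho_x\log\overline{\rho})$. This gives
\[
\sum_x p_x D(\rho_x\|\overline{\rho})=-\sum_x p_x S(\rho_x)-\Tr\Bigl(\bigl(\textstyle\sum_x p_x\rho_x\bigr)\log\overline{\rho}\Bigr).
\]
The second term is linear in $\rho_x$, so it equals $-\Tr(\overline{\rho}\log\overline{\rho})=S(\overline{\rho})$. The sum therefore equals $S(\overline{\rho})-\sum_x p_xS(\rho_x)$, which is the third line of Eq. (\ref{e2}).

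The only delicate point is finiteness and support. The relative entropy $D(\rho_x\|\overline{\rho})$ is finite only if $\mathrm{supp}(\rho_x)\subseteq\mathrm{supp}(\overline{\rho})$. This holds automatically whenever $p_x>0$, because $\overline{\rho}\geq p_x\rho_x$. Terms with $p_x=0$ are dropped by the usual convention $0\cdot D=0$, or one can restrict to ensembles with strictly positive weights without loss of generality. With this, $\log\overline{\rho}$ can be taken on its support and the manipulation above is legitimate.

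I do not expect a real obstacle here. The one thing to address explicitly is that the maximum in Eq. (\ref{e2}) is attained, or at least that the two suprema coincide. Since the identity holds for every ensemble, the two optimization problems have identical objective values on the same feasible set, so their suprema coincide regardless of attainment. I would state the lemma in that form, noting that the objective is bounded by $\log d$, so the supremum is finite.
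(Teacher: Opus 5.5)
Your proof is correct and follows essentially the same route as the paper. Both arguments use the support inclusion $\mathrm{supp}(\mathcal{N}_x(\rho))\subseteq\mathrm{supp}(\overline{\rho})$, expand each relative entropy, collapse the cross term into $S(\overline{\rho})$ by linearity of the trace, and then maximize over ensembles; your remarks on zero-weight terms (where the paper's claim of support inclusion for every $x$ is slightly loose) and on the suprema coinciding without attainment are small but welcome refinements.
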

\noindent
An intuition behind Eq. (\ref{e3}) can be found in \cite{vedral2002role}, however, for our proposed settings, we detail the proof in Appendix \ref{pl1}.

We will now examine the salient features of the quantity \(\mathcal{C}_{\mathcal{R}}(\rho)\) to be a valid resource measure.
\begin{proposition}\label{p1}
    (Bounded) For any restricted set of encoding operations \(\mathbf{N}_R\) and for any arbitrary initial quantum state \(\rho\in\mathcal{D}(\mathcal{H}_d)\),
    \[\mathcal{C}_{R}(\rho)\leq\log d.\]
\end{proposition}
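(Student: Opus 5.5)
The bound follows directly from the Holevo-information form of \(\mathcal{C}_R(\rho)\) in Eq. (\ref{e2}), with no structure of \(\mathbf{N}_R\) needed beyond each \(\mathcal{N}_x\) being a CPTP map on \(\mathcal{L}(\mathcal{H}_d)\). Fix an arbitrary encoding ensemble \(\{p_x,\mathcal{N}_x\}\) with \(\mathcal{N}_x\in\mathbf{N}_R\). Then \(\mathcal{N}_x(\rho)\in\mathcal{D}(\mathcal{H}_d)\) for every \(x\), and so \(\overline{\rho}=\sum_x p_x\mathcal{N}_x(\rho)\in\mathcal{D}(\mathcal{H}_d)\) as well. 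The restriction to isomorphic input and output spaces matters here: it is what keeps the output dimension equal to \(d\).

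Next I bound the two terms of the third line of Eq. (\ref{e2}) separately. The von Neumann entropy of any state on \(\mathcal{H}_d\) satisfies \(S(\overline{\rho})\leq\log d\), since its eigenvalues form a probability distribution on at most \(d\) outcomes and the Shannon entropy of such a distribution is maximized by the uniform one. Also \(S(\mathcal{N}_x(\rho))\geq 0\) for every \(x\). Together these give
\begin{align*}
\mathcal{C}_R(\rho,\{p_x,\mathcal{N}_x\})=S(\overline{\rho})-\sum_x p_x S(\mathcal{N}_x(\rho))\leq\log d-0=\log d .
\end{align*}
This bound is uniform in the ensemble, so it also bounds the supremum over \(\{p_x,\mathcal{N}_x\in\mathbf{N}_R\}\), giving \(\mathcal{C}_R(\rho)\leq\log d\).

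A cross-check uses Lemma \ref{l1}: \(\sum_x p_x D(\mathcal{N}_x(\rho)\|\overline{\rho})=S(\overline{\rho})-\sum_x p_xS(\mathcal{N}_x(\rho))\), so it gives the same estimate. Alternatively, one can use \(D(\omega\|\overline{\rho})\leq \log d\) in averaged form, which is again the computation above. The single-copy quantity \(\mathcal{C}^{(1)}_R\) obeys the same bound, because \(I(X:Y)\leq\chi\) by Holevo's theorem.

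I expect essentially no obstacle. The only point needing care is that "max" in Eq. (\ref{e2}) may be a supremum, since \(\mathbf{N}_R\) need not be closed and the number of messages is unbounded. This causes no difficulty, because the bound holds for every ensemble individually.
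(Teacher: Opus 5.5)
Your argument is correct and is essentially the paper's. The paper invokes Holevo's bound that a single $d$-level system carries at most $\log d$ bits, and concludes $\mathcal{C}_R(\rho)\le\mathcal{C}(\rho)=\log d$ because $\mathbf{N}_R$ is contained in the unrestricted encodings; it also notes, though this is not needed for the upper bound, that $\log d$ is attained by replacement channels onto an orthonormal basis. You instead prove that Holevo bound directly from $S(\overline{\rho})\le\log d$ and $S(\mathcal{N}_x(\rho))\ge 0$, which makes the proof self-contained.
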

\noindent
The proof, although quite intuitive, is given in Appendix \ref{pp1}.
\begin{proposition}\label{p2}
   (Monotonicity) For any QRT \(\mathcal{R}\) and an arbitrary quantum state \(\rho\) 
   \[\mathcal{C}_{\mathcal{R}}(\rho)\geq\mathcal{C}_{\mathcal{R}}(\mathcal{N}(\rho)),\]
   for every \(\mathcal{N}\in\mathbf{N}_{\mathcal{R}}\).
\end{proposition}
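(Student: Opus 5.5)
The plan is to show that every free encoding ensemble available for \(\mathcal{N}(\rho)\) is also available, after a relabelling, for \(\rho\). The key observation is that \(\mathbf{N}_{\mathcal{R}}\) is closed under composition. If \(\mathcal{N}_1,\mathcal{N}_2\in\mathbf{N}_{\mathcal{R}}\), then for every \(\sigma\in\mathcal{F}_{\mathcal{R}}\) we have \(\mathcal{N}_1(\sigma)\in\mathcal{F}_{\mathcal{R}}\). Applying \(\mathcal{N}_2\) then gives \(\mathcal{N}_2\circ\mathcal{N}_1(\sigma)\in\mathcal{F}_{\mathcal{R}}\). Since \(\mathcal{N}_2\circ\mathcal{N}_1\) is also CPTP, it belongs to \(\mathbf{N}_{\mathcal{R}}\) by the very definition of the free operations.

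With this in hand, fix \(\mathcal{N}\in\mathbf{N}_{\mathcal{R}}\) and take any ensemble \(\{p_x,\mathcal{M}_x\}\) with \(\mathcal{M}_x\in\mathbf{N}_{\mathcal{R}}\). This could be an optimal ensemble for \(\mathcal{N}(\rho)\) in Eq. (\ref{e2}); if the maximum is only a supremum, we run the argument for every ensemble and then take the supremum. Define \(\mathcal{N}'_x:=\mathcal{M}_x\circ\mathcal{N}\), which lies in \(\mathbf{N}_{\mathcal{R}}\) by the closure property. The output states coincide, \(\mathcal{N}'_x(\rho)=\mathcal{M}_x(\mathcal{N}(\rho))\), so the communication utilities are equal:
\[
\mathcal{C}_{\mathcal{R}}(\rho,\{p_x,\mathcal{N}'_x\})=\mathcal{C}_{\mathcal{R}}(\mathcal{N}(\rho),\{p_x,\mathcal{M}_x\}).
\]
This holds term by term, either in the Holevo form of Eq. (\ref{e2}) or in the relative-entropy form of Lemma \ref{l1}, because the averaged states \(\overline{\rho}\) also agree. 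The left-hand side is bounded by \(\mathcal{C}_{\mathcal{R}}(\rho)\). Maximising the right-hand side over \(\{p_x,\mathcal{M}_x\}\) then gives \(\mathcal{C}_{\mathcal{R}}(\mathcal{N}(\rho))\leq\mathcal{C}_{\mathcal{R}}(\rho)\).

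There is no real obstacle, only a subtle point: the set of free operations must be closed under composition and must act on the same \(d\)-dimensional space. The paper restricted to channels on \(\mathcal{L}(\mathcal{H}_d)\) precisely so that this holds, and I would state that explicitly.
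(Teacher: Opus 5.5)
Your proposal is correct and follows essentially the same route as the paper: take a free encoding ensemble for \(\mathcal{N}(\rho)\), precompose each encoding with \(\mathcal{N}\), use closure of \(\mathbf{N}_{\mathcal{R}}\) under composition to get an equal-utility free ensemble for \(\rho\), and bound it by the maximum defining \(\mathcal{C}_{\mathcal{R}}(\rho)\). Your derivation of the closure property directly from the definition of the free operations (which the paper only asserts) and your handling of the case where the maximum is only a supremum are small but welcome refinements.
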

\noindent
The proof is deferred in Appendix \ref{pp2}.

We will now show that the FEC \(\mathcal{C}_{\mathcal{R}}(\rho)\), in fact, is a strongly monotonic resource measure under stochastically free operations for any arbitrary QRT. 

\begin{proposition}\label{p3}
  (Strong Monotonicity under stochastically free operations) For an arbitrary QRT \(\mathcal{R}\) and for an arbitrary quantum state \(\rho\) 
  \[\mathcal{C}_{\mathcal{R}}(\rho)\geq\sum_kp_k\mathcal{C}_{\mathcal{R}}(\sigma_k^{\rho}),\]
  where, \(\forall k,~\sigma_k^{\rho}=\frac{\mathcal{E}_k(\rho)}{\Tr[\mathcal{E}_k(\rho)]}\), \(p_k=\Tr[\mathcal{E}_k(\rho)]\) and \(\mathcal{E}_k(*)=L_k(*)L_k^{\dagger}\) is a CP trace non-increasing free operation, and \(\sum_k\mathcal{E}_k(*)=\sum_kL_k(*)L_k^{\dagger}\) is a free operation (CPTP map).
\end{proposition}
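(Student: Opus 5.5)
The plan is to build, out of optimal free encodings for the post-selected states $\sigma_k^{\rho}$, one free encoding ensemble acting directly on $\rho$. Its communication utility should be at least $\sum_k p_k\mathcal{C}_{\mathcal{R}}(\sigma_k^{\rho})$. I will work with the Holevo form of Eq.~(\ref{e2}) and use the relative-entropy form of Lemma~\ref{l1} only when comparing averages. I do not yet know how to carry out the decoding step in the paper's fixed-dimension setting.

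\emph{Step 1 (optimal encodings for each branch).} For each outcome $k$, fix an ensemble $\{q_{x|k},\mathcal{N}_{x|k}\}$ with $\mathcal{N}_{x|k}\in\mathbf{N}_{\mathcal{R}}$ that attains $\mathcal{C}_{\mathcal{R}}(\sigma_k^{\rho})$; if the maximum is only a supremum, take one that comes within $\epsilon$ of it.

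\emph{Step 2 (a product-indexed encoding of $\rho$).} Index messages by tuples $\bar{x}=(x_k)_k$, drawn with probability $q(\bar{x})=\prod_k q_{x_k|k}$. Define the encoding maps
\[
\Lambda_{\bar{x}}(\cdot)=\sum_k \mathcal{N}_{x_k|k}\bigl(\mathcal{E}_k(\cdot)\bigr).
\]
Each $\Lambda_{\bar{x}}$ is CPTP, because $\sum_k\mathcal{E}_k$ is trace preserving and every $\mathcal{N}_{x_k|k}$ is CPTP. To show it is free, take a free state $\sigma$. Stochastic freeness gives $\mathcal{E}_k(\sigma)/\Tr[\mathcal{E}_k(\sigma)]\in\mathcal{F}_{\mathcal{R}}$, and then $\mathcal{N}_{x_k|k}$ keeps it free. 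So $\Lambda_{\bar{x}}(\sigma)$ is a convex combination of free states, which is free by convexity of $\mathcal{F}_{\mathcal{R}}$. On $\rho$ itself the encoding produces
\[
\Lambda_{\bar{x}}(\rho)=\sum_k p_k\,\mathcal{N}_{x_k|k}(\sigma_k^{\rho}).
\]

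\emph{Step 3 (evaluating the Holevo quantity).} Introduce a fictitious classical register $K$ recording the branch. With that register, the joint state is the classical--quantum state $\sum_{\bar{x},k}q(\bar{x})p_k\ketbra{\bar{x}}{\bar{x}}\otimes\ketbra{k}{k}\otimes\mathcal{N}_{x_k|k}(\sigma_k^{\rho})$. Because $\bar{x}$ is independent of $K$, the chain rule gives
\[
\chi(\bar{X}\!:\!BK)=I(\bar{X}\!:\!K)+I(\bar{X}\!:\!B|K)=\sum_k p_k\chi_k,
\]
where $\chi_k$ is the Holevo quantity of the $k$-th optimal ensemble, so this equals $\sum_k p_k\mathcal{C}_{\mathcal{R}}(\sigma_k^{\rho})$ (the marginal over the $x_{k'}$ with $k'\neq k$ plays no role). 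Since $\mathcal{C}_{\mathcal{R}}(\rho)\geq\chi(\bar{X}\!:\!B)$, it would then suffice to show $\chi(\bar{X}\!:\!B)\geq\chi(\bar{X}\!:\!BK)$. Equivalently, in the form of Lemma~\ref{l1}, one needs $\sum_{\bar{x}}q(\bar{x})D(\Lambda_{\bar{x}}(\rho)\|\overline{\rho})\geq\sum_kp_k\sum_x q_{x|k}D(\mathcal{N}_{x|k}(\sigma_k^{\rho})\|\overline{\sigma_k})$.

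\emph{Main obstacle.} Step~3 runs against data processing. Bob only receives the $K$-averaged state, so in general $\chi(\bar{X}\!:\!B)=\chi(\bar{X}\!:\!BK)-I(\bar{X}\!:\!K|B)\leq\chi(\bar{X}\!:\!BK)$, and the loss term $I(\bar{X}\!:\!K|B)$ is exactly the difficulty. For the same reason, joint convexity of relative entropy and convexity of $\mathcal{C}_{\mathcal{R}}$ give inequalities in the wrong direction. I would try two ways to close this gap.
\begin{itemize}
\item First, use the freedom in choosing the optimal ensembles: compose them with suitable free maps so that the outputs for different $k$ are supported on mutually orthogonal subspaces. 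Then $K$ is a function of Bob's state, the loss term vanishes, and equality holds.
\item Failing that, adopt the standard strong-monotonicity convention that the flagged instrument $\rho\mapsto\sum_k\ketbra{k}{k}\otimes\mathcal{E}_k(\rho)$ counts as free. In that setting the flag is available to Bob and Step~3 is exact.
\end{itemize}
Since the paper restricts to isomorphic input and output spaces, I expect this decoding of the branch index to be the genuinely delicate point of the proof.
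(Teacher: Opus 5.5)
\textbf{There is a genuine gap, and it cannot be closed for the quantity in Eq.~(\ref{e2}).} Your Steps 1--3 are sound. The maps $\Lambda_{\bar x}$ are free, and $\chi(\bar X\!:\!BK)=\sum_kp_k\chi_k$. You have also isolated exactly the right obstruction, the loss term $I(\bar X\!:\!K|B)$. But the proposal stops there, and neither of your remedies removes that term.

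\emph{First remedy.} Orthogonalizing the branches inside a fixed $\mathcal{H}_d$ fails in general. Free post-processing cannot raise any $\chi_k$, by data processing. Confining branch $k$ to a $d_k$-dimensional subspace with $\sum_kd_k\le d$ then caps $\chi_k$ at $\log d_k$.

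\emph{Second remedy.} Counting the flagged instrument as an encoding proves the inequality for a different, larger capacity, whose encodings output into $\mathcal{H}_d\otimes\mathbb{C}^{K}$. It does not prove it for $\mathcal{C}_{\mathcal{R}}$, whose encodings are free channels $\mathcal{L}(\mathcal{H}_d)\to\mathcal{L}(\mathcal{H}_d)$.

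\emph{The paper's proof.} Appendix~\ref{pp3} is in substance your second remedy. It sends $k$ over an auxiliary classical channel, computes $\max\chi(\tau_{qc})=H(p_k)+\sum_kp_k\mathcal{C}_{\mathcal{R}}(\sigma_k^{\rho})$, and bounds this by $\mathcal{I}_{\max}(\rho)=\mathcal{C}_{\mathcal{R}}(\rho)+H(p_k)$. However, Eq.~(\ref{e10}) covers only cq states $\sum_{x,y}\mu_{xy}\,\mathcal{N}_{xy}(\rho)\otimes|y\rangle\langle y|$, with free $\mathcal{N}_{xy}$ applied to $\rho$ itself. In $\tau_{qc}$, the symbol $k$ is the instrument's outcome and the quantum part $\mathcal{N}_x^k(\sigma_k^{\rho})$ is correlated with it. In general no free channel maps $\rho$ to $\sigma_k^{\rho}$. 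After cancelling $H(p_k)$, the step $\mathcal{I}_{\max}(\rho)\ge\chi(\tau_{qc})$ is just the proposition restated. So the paper does not supply the missing idea either.

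\textbf{Counterexample.} The loss term cannot be eliminated, even in a pointed theory. Take a qubit with $\mathcal{F}=\{|0\rangle\langle0|\}$, so the free operations are all channels fixing $|0\rangle\langle0|$. Let $\rho=\mathbb{I}/2$, $L_0=|0\rangle\langle0|$ and $L_1=|1\rangle\langle1|$.
\begin{itemize}
\item Right-hand side: $p_0=p_1=\tfrac12$ and $\mathcal{C}_{\mathcal{R}}(|0\rangle\langle0|)=0$. Also $\mathcal{C}_{\mathcal{R}}(|1\rangle\langle1|)=1$, using the identity and the replacement channel onto $|0\rangle$. So the right-hand side equals $\tfrac12$.
\item Form of the outputs: every free $\mathcal{N}_x$ gives $\mathcal{N}_x(\rho)=\tfrac12|0\rangle\langle0|+\tfrac12\beta_x$, with $\beta_x=\mathcal{N}_x(|1\rangle\langle1|)$. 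Write $h$ for the binary entropy and $t=\sum_xp_x\langle1|\beta_x|1\rangle$.
\item Average state: dephasing in the computational basis gives $S\bigl(\sum_xp_x\mathcal{N}_x(\rho)\bigr)\le h(t/2)$.
\item Individual states: for pure $|b\rangle$ with $s=|\langle0|b\rangle|$, $S\bigl(\tfrac12|0\rangle\langle0|+\tfrac12|b\rangle\langle b|\bigr)=h\bigl(\tfrac{1-s}{2}\bigr)\ge1-s^2$. Concavity of $S$ then gives $S(\mathcal{N}_x(\rho))\ge\langle1|\beta_x|1\rangle$.
\item Conclusion: $\mathcal{C}_{\mathcal{R}}(\rho)\le\max_{t\in[0,1]}[h(t/2)-t]=h(1/5)-2/5\approx0.32<\tfrac12$.
\end{itemize}
This also violates Corollary~\ref{c1}, since the instrument yields $|1\rangle\langle1|$ with probability $\tfrac12$. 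If you object that $L_1$ annihilates $\rho_F$, use $L_1=|1\rangle\langle1|+\epsilon|0\rangle\langle0|$ and $L_0=\sqrt{1-\epsilon^2}\,|0\rangle\langle0|$; the gap persists for small $\epsilon$.

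So the statement as posed is false. It holds for the flagged capacity, where your Step~3 is exact, but no choice of encodings within $\mathcal{C}_{\mathcal{R}}$ will close your gap.
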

\noindent
The proof is detailed in Appendix \ref{pp3}. 

At this point it is important to mention that while the monotonicity in Proposition \ref{p2} holds for any arbitrary free operations, the strong monotonicity in Proposition \ref{p3} only refers to the free operations for which all the Kraus operators are indeed free, i.e., keeping the set of free states \(\mathcal{F}_{\mathcal{R}}\) invariant. For illustration, consider the Gibbs preserving operations in the context of quantum thermodynamics. Although every Gibbs-preserving operation preserves the thermal state, all of them are not stochastically free operations, i.e., the individual Kraus operators are not Gibbs-preserving at all (see the prototypical example in Ref. \cite{faist2015gibbs}). 

\begin{proposition}\label{p4}
    (Convexity) For any QRT \(\mathcal{R}\) and for any set of quantum states \(\{\rho_k\}_{k=0}^{n-1}\),
    \[\mathcal{C}_{\mathcal{R}}(\sum_kp_k\rho_k)\leq\sum_kp_k\mathcal{C}_{\mathcal{R}}(\rho_k),\]
    where, \(\sum_k p_k=1\) with \(0\leq p_k\leq 1\) for every \(k\).
\end{proposition}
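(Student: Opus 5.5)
The plan is to fix one free encoding ensemble that is optimal for the mixture, and then bound its Holevo quantity by a conditional mutual information that splits over the components $\rho_k$.

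Write $\rho=\sum_k p_k\rho_k$. Let $\{q_x,\mathcal{N}_x\}$ with $\mathcal{N}_x\in\mathbf{N}_{\mathcal{R}}$ be an ensemble attaining the maximum in Eq.~(\ref{e2}) for $\rho$. If the maximum is only a supremum, take an $\epsilon$-optimal ensemble and let $\epsilon\to0$ at the end. It then suffices to show
\[
\mathcal{C}_{\mathcal{R}}(\rho,\{q_x,\mathcal{N}_x\})\leq\sum_k p_k\,\mathcal{C}_{\mathcal{R}}(\rho_k,\{q_x,\mathcal{N}_x\}).
\]
Each term on the right is at most $\mathcal{C}_{\mathcal{R}}(\rho_k)$, because the same ensemble is an admissible free encoding for every $\rho_k$. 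Convexity then follows.

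To prove the displayed inequality, I will introduce a classical flag register $K$ recording which component was prepared. Consider the classical--classical--quantum state
\[
\omega_{XKB}=\sum_{x,k}q_xp_k\,\ketbra{x}{x}_X\otimes\ketbra{k}{k}_K\otimes\mathcal{N}_x(\rho_k)_B .
\]
By linearity of each $\mathcal{N}_x$, tracing out $K$ gives $\sum_x q_x\ketbra{x}{x}\otimes\mathcal{N}_x(\rho)$. Hence $I(X:B)_\omega=\mathcal{C}_{\mathcal{R}}(\rho,\{q_x,\mathcal{N}_x\})$.

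The key steps, in order, are as follows.
\begin{enumerate}
\item Data processing under the partial trace over $K$ gives $I(X:B)_\omega\leq I(X:BK)_\omega$.
\item The chain rule gives $I(X:BK)=I(X:K)+I(X:B|K)$.
\item Since $X$ and $K$ are independent in $\omega$ (the distribution is the product $q_xp_k$), we have $I(X:K)=0$.
\item Because $K$ is classical, $I(X:B|K)=\sum_kp_k\,I(X:B)_{\omega_k}$, where $\omega_k=\sum_xq_x\ketbra{x}{x}\otimes\mathcal{N}_x(\rho_k)$. Each term is exactly $\mathcal{C}_{\mathcal{R}}(\rho_k,\{q_x,\mathcal{N}_x\})$.
\end{enumerate}
Chaining these steps gives the claimed inequality. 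Equivalently, in entropic form, the argument is concavity of the conditional entropy $S(B|X)$ combined with the bound $S(B)\le S(B|K)+H(K)$. The flag-register form above is cleaner.

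The step needing the most care is the reduction to a single fixed ensemble, that is, noticing that the optimiser for the mixture is a valid, possibly suboptimal, encoding for each $\rho_k$. Two conditions make this work. First, the set $\mathbf{N}_{\mathcal{R}}$ does not depend on the input state. Second, the Holevo quantity for a fixed channel ensemble is convex in the input, which is exactly what steps 1--4 establish. No pointedness or other structure of $\mathcal{R}$ is used, so the proposition holds for an arbitrary QRT, as stated. The only remaining technical point is attainment of the maximum, which the $\epsilon$-argument handles.
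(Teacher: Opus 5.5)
Your proof is correct, and its skeleton matches the paper's. You fix a (near-)optimal free ensemble $\{q_x,\mathcal{N}_x\}$ for $\sum_kp_k\rho_k$, show that the Holevo quantity of this fixed ensemble is convex in the input, and bound each $\mathcal{C}_{\mathcal{R}}(\rho_k,\{q_x,\mathcal{N}_x\})$ by $\mathcal{C}_{\mathcal{R}}(\rho_k)$.

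Only the middle step is phrased differently. The paper uses Lemma~\ref{l1} to write the Holevo quantity as $\sum_xq_xD(\mathcal{N}_x(\rho)\|\overline{\sigma})$ with $\overline{\sigma}=\sum_kp_k\overline{\sigma}_k$. It then applies joint convexity of the relative entropy separately for each $x$. You instead attach a classical flag $K$ and use data processing plus the chain rule.

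These are the same inequality in two languages. Because $X$ and $K$ are independent, $I(X:BK)_\omega=\sum_{x,k}q_xp_k\,D(\mathcal{N}_x(\rho_k)\|\overline{\sigma}_k)$. So your step 1 is exactly the paper's joint-convexity step, which is itself usually proved by monotonicity of $D$ under discarding such a flag.

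Your phrasing has two advantages. It avoids Lemma~\ref{l1}, and it makes the operational content visible: telling Bob which component was prepared cannot hurt, and since $K$ carries no information about the message, the gain is the $p_k$-average. This is in the same spirit as the classical side channel used for Proposition~\ref{p3}. The paper's phrasing, by contrast, is a one-line citation once Lemma~\ref{l1} is available. Your $\epsilon$-optimal ensemble also removes the paper's tacit assumption that the maximum is attained.

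One correction to your closing aside. Concavity of $S(B|X)$ together with $S(B)\le S(B|K)+H(K)$ is \emph{not} an equivalent form of the argument. Used separately, those two bounds give only $\mathcal{C}_{\mathcal{R}}(\rho,\{q_x,\mathcal{N}_x\})\le\sum_kp_k\,\mathcal{C}_{\mathcal{R}}(\rho_k,\{q_x,\mathcal{N}_x\})+H(p)$, which carries a spurious $H(p)$.

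The correct entropic form of your flag argument is $S(B)-S(B|K)\le S(B|X)-S(B|XK)$, i.e.\ $I(K:B)\le I(K:B|X)$. This is a strong-subadditivity statement (using $I(X:K)=0$) that compares the two entropy drops rather than bounding them independently. Since steps 1--4 do not rely on the aside, the proof stands; just drop or fix that sentence.
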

\noindent
We defer the proof in Appendix \ref{pp4}.

The above Propositions (\ref{p1} -- \ref{p4}) do not explicitly involve the structure of the set of free states \(\mathcal{F}_{\mathcal{R}}\) for the QRT \(\mathcal{R}\). Therefore, from these four properties we can conclude the following:
\begin{corollary}\label{c2}
    For any arbitrary QRT \(\mathcal{R}\) and for a given quantum state \(\rho\), the free encoding capacity \(\mathcal{C}_{\mathcal{R}}(\rho)\) is a bounded, strongly monotonic, and convex resource measure. 
\end{corollary}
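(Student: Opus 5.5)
Corollary~\ref{c2} will follow by assembling Propositions~\ref{p1}--\ref{p4} and checking that none of their hypotheses refers to the structure of the free set \(\mathcal{F}_{\mathcal{R}}\). Concretely, I fix an arbitrary QRT \(\mathcal{R}\) with free operations \(\mathbf{N}_{\mathcal{R}}\), and treat \(\mathcal{C}_{\mathcal{R}}\) as the function on \(\mathcal{D}(\mathcal{H}_d)\) defined by Eq.~(\ref{e2}). I would first note that this maximum is well defined, or at least a supremum that can be handled the same way. The Holevo quantity is continuous and bounded, and by Carathéodory-type arguments the number of encodings \(x\) can be bounded in terms of \(d\). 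So the optimisation runs over a bounded set of ensembles, and all subsequent inequalities hold equally for suprema.

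\textbf{Boundedness.} This is immediate from Proposition~\ref{p1}, which holds for an arbitrary restricted encoding set \(\mathbf{N}_R\). I would specialise it to \(\mathbf{N}_R=\mathbf{N}_{\mathcal{R}}\), giving \(0\le \mathcal{C}_{\mathcal{R}}(\rho)\le\log d\). The lower bound \(0\) is nonnegativity of the Holevo quantity, or equivalently nonnegativity of the relative entropy in Lemma~\ref{l1}.

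\textbf{Monotonicity.} I would invoke Proposition~\ref{p2} for deterministic free operations. For stochastically free operations, with all Kraus branches \(\mathcal{E}_k\) mapping \(\mathcal{F}_{\mathcal{R}}\) into its cone, I would invoke Proposition~\ref{p3}. The only thing to verify is the logical relation between the two. A free channel \(\mathcal{N}\) that is stochastically free, viewed as a single-outcome instrument, makes Proposition~\ref{p3} reduce to Proposition~\ref{p2}. So ``strongly monotonic'' is correctly read as covering both the averaged statement for instruments and the deterministic statement.

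\textbf{Convexity.} This is Proposition~\ref{p4}.

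The one point needing care, and the main obstacle, is the word ``resource measure''. Usually this also requires that free states have zero value. For a general QRT, that is exactly what may fail: a non-pointed QRT can have several free states that free operations permute, so \(\mathcal{C}_{\mathcal{R}}(\sigma)>0\) is possible for \(\sigma\in\mathcal{F}_{\mathcal{R}}\). Faithfulness is therefore deliberately excluded from the corollary. I would state explicitly that the corollary asserts only the three listed properties. I would also remark that each of Propositions~\ref{p1}--\ref{p4} was proved using only that \(\mathbf{N}_{\mathcal{R}}\) is closed under composition and contains the identity. Closure under composition is needed in Proposition~\ref{p2}: \(\mathcal{N}_x\circ\mathcal{N}\in\mathbf{N}_{\mathcal{R}}\). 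Closure under taking Kraus branches is needed in Proposition~\ref{p3}. Both properties follow from the definition of \(\mathbf{N}_f^{\mathcal{R}}\) as the maps preserving \(\mathcal{F}_{\mathcal{R}}\), whatever \(\mathcal{F}_{\mathcal{R}}\) is. This confirms the corollary holds for arbitrary \(\mathcal{R}\).
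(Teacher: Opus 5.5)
Your assembly of Propositions~\ref{p1}--\ref{p4} is the same argument the paper uses for Corollary~\ref{c2}: none of them relies on any particular structure of $\mathcal{F}_{\mathcal{R}}$. Your boundedness, monotonicity and convexity paragraphs are fine.

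The closing justification, however, is wrong. Closure under taking Kraus branches does \emph{not} follow from defining $\mathbf{N}_f^{\mathcal{R}}$ as the $\mathcal{F}_{\mathcal{R}}$-preserving channels; only closure under composition does. The paper itself flags this with Gibbs-preserving maps. A simpler instance is the qubit pinching channel $\rho\mapsto\ketbra{0}{0}\rho\ketbra{0}{0}+\ketbra{1}{1}\rho\ketbra{1}{1}$ in the purity theory. It is unital, hence free, yet each branch sends $\mathbb{I}/2$ to a multiple of a pure state. This matters, because for that instrument strong monotonicity actually fails: $\mathcal{C}_{\text{purity}}(\mathbb{I}/2)=0$, while $\sum_k p_k\,\mathcal{C}_{\text{purity}}(\ketbra{k}{k})=1$.

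So the freeness of every branch must remain an explicit hypothesis on the instrument, as in Proposition~\ref{p3} and as you correctly wrote in your monotonicity paragraph. It is not a consequence of $\mathcal{F}_{\mathcal{R}}$-preservation. The corollary holds for arbitrary $\mathcal{R}$ because the proofs of Propositions~\ref{p1}--\ref{p4} use only:
\begin{itemize}
\item linearity of channels,
\item joint convexity of $D$,
\item closure of $\mathbf{N}_{\mathcal{R}}$ under composition,
\item that branch hypothesis.
\end{itemize}
The identity channel plays no role there; it enters only in Proposition~\ref{p5}.

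On your ``obstacle'' about free states, you can resolve more than you claim. Every replacement channel $\rho\mapsto\sigma'$ with $\sigma'\in\mathcal{F}_{\mathcal{R}}$ is free, so Proposition~\ref{p2} gives $\mathcal{C}_{\mathcal{R}}(\rho)\ge\mathcal{C}_{\mathcal{R}}(\sigma')$ for every $\rho$. Hence $\mathcal{C}_{\mathcal{R}}$ is constant on $\mathcal{F}_{\mathcal{R}}$ and minimal there. Subtracting that constant (the ``suitable scaling'' mentioned after the corollary) gives a quantity that vanishes on free states and keeps boundedness, strong monotonicity and convexity. It is still not faithful in general, by Proposition~\ref{p6}.
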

The boundedness of Proposition \ref{p1} implies a suitable scaling for the quantity FEC depending upon the QRT, such that \(\mathcal{C}_{\mathcal{R}}(\rho)=0\) for \(\rho\in\mathcal{F}_{\mathcal{R}}\). However, it does not imply faithfulness, i.e., \(\mathcal{C}_{\mathcal{R}}(\rho)=0\nRightarrow\rho\in\mathcal{F}\) in general. At this juncture, we will revisit the instance of the pointed QRTs.
\begin{proposition}\label{p5}
    (Faithfulness for pointed QRT) For every pointed QRT, the free encoding capacity is a faithful resource measure.
\end{proposition}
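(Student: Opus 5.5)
Faithfulness has two directions, and we treat them separately, with $\rho_F$ denoting the unique free state of the pointed QRT $\mathcal{R}$.

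\emph{Free state gives zero.} Every $\mathcal{N}\in\mathbf{N}_{\mathcal{R}}$ maps $\mathcal{F}_{\mathcal{R}}=\{\rho_F\}$ into itself, so $\mathcal{N}(\rho_F)=\rho_F$. Hence for any free encoding ensemble $\{p_x,\mathcal{N}_x\}$ we have $\mathcal{N}_x(\rho_F)=\overline{\rho}=\rho_F$. By Lemma~\ref{l1} every term $D(\mathcal{N}_x(\rho_F)\|\overline{\rho})$ vanishes, so $\mathcal{C}_{\mathcal{R}}(\rho_F)=0$.

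\emph{Resource state gives a positive value.} Take $\rho\neq\rho_F$. It suffices to exhibit one free encoding with strictly positive Holevo quantity. The plan is to use a two-element ensemble with $p_0=p_1=1/2$:
\begin{itemize}
\item $\mathcal{N}_0=\mathrm{id}$, which is trivially free since it leaves $\rho_F$ invariant;
\item $\mathcal{N}_1$ the replacement (resource-destroying) channel $\mathcal{N}_1(X)=\Tr[X]\,\rho_F$. This is CPTP and maps the only free state to itself, so $\mathcal{N}_1\in\mathbf{N}_{\mathcal{R}}$.
\end{itemize}
The encoded states are then $\rho$ and $\rho_F$, with average $\overline{\rho}=(\rho+\rho_F)/2$. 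By Lemma~\ref{l1},
\[
\mathcal{C}_{\mathcal{R}}(\rho)\geq \tfrac12 D\bigl(\rho\,\big\|\,\overline{\rho}\bigr)+\tfrac12 D\bigl(\rho_F\,\big\|\,\overline{\rho}\bigr),
\]
which is the quantum Jensen--Shannon divergence of $\rho$ and $\rho_F$.

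Two checks make this bound usable. First, it is finite even when $\rho_F$ is rank one, because $\mathrm{supp}(\rho),\mathrm{supp}(\rho_F)\subseteq\mathrm{supp}(\overline{\rho})$; this sidesteps the divergence issue of the relative entropy of resource noted in the Framework. Second, it is strictly positive, since $D(\omega\|\overline{\rho})=0$ iff $\omega=\overline{\rho}$, and $\rho=\overline{\rho}$ iff $\rho=\rho_F$. Equivalently, by the Holevo form in Eq.~(\ref{e2}), strict concavity of von Neumann entropy gives $S(\overline{\rho})>\tfrac12 S(\rho)+\tfrac12 S(\rho_F)$ whenever $\rho\neq\rho_F$.

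The main point requiring care is justifying that the replacement channel is free. This uses exactly the pointedness: with several free states, $\mathcal{N}_1$ would still preserve $\mathcal{F}_{\mathcal{R}}$ if convex, but $\rho$ might be nonfree while every free encoding of it collapses, and the argument would not guarantee a distinguishable pair. In the pointed case the pair is $\rho$ versus $\rho_F$, and $\rho\neq\rho_F$ is precisely the nonfreeness of $\rho$. Combining the two directions, $\mathcal{C}_{\mathcal{R}}(\rho)=0$ iff $\rho=\rho_F$, which proves Proposition~\ref{p5}.
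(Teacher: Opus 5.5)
Your proof is correct and is essentially the paper's argument: every free operation fixes $\rho_F$, which gives $\mathcal{C}_{\mathcal{R}}(\rho_F)=0$, and the identity together with the replacement channel onto $\rho_F$ gives a strictly positive Holevo quantity for $\rho\neq\rho_F$ (you fix $p=1/2$ where the paper maximizes over $p$, and you supply the strict-concavity justification that the paper leaves implicit). One correction to your closing remark: the replacement channel $X\mapsto\Tr[X]\,\sigma$ onto any free state $\sigma$ is free in every QRT, with no convexity or pointedness needed, so the positivity direction holds in general; pointedness is really used in your first direction, because two distinct free states $\sigma_1\neq\sigma_2$ would, via replacement channels onto them, give every input (free states included) a strictly positive FEC, cf.\ Proposition~\ref{p6}.
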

\noindent
While the proof is depicted in Appendix \ref{pp5}, in Fig. \ref{f0} we give a schematic idea behind the proof.

Finally, the Propositions \ref{p1} - \ref{p5} together conclude the proof of Theorem \ref{t1}. 

\textit{An illustrative example.} -- We will now discuss an elegant example of pointed QRT, namely the resource theory of purity. For such a pointed QRT over \(d\)-dimensional quantum system, the free state is the maximally mixed one, i.e., \(\rho_F=\frac{\mathbb{I}}{d}\). Within such a QRT, consider the FEC of an arbitrary quantum state \(\rho\in\mathcal{D}(\mathcal{H}_d)\),
\[\mathcal{C}_{\text{purity}}(\rho)=\max_{p_x,\mathcal{N}_x}[S(\sum_xp_x\mathcal{N}_x(\rho))-\sum_xp_xS(\mathcal{N}_x(\rho))],\]
where, every \(\mathcal{N}_x\in\mathbf{N}_{\text{purity}}\) -- the set of unital channels. Now, observe that being a \(d\)-level quantum system
\begin{align}\label{e5}
S(\overline{\rho})=S(\sum_xp_x\mathcal{N}_x(\rho))\leq\log d.
\end{align}
On the other hand, since under every unital channel the entropy of an arbitrary quantum state is always non-decreasing, we have
\begin{align}\label{e6}
\sum_xp_xS(\mathcal{N}_x(\rho))\geq S(\rho).
\end{align}
Interestingly, the unitary operations are indeed the specific choice of free encoding for an arbitrary quantum state \(\rho\), which saturate both the Eqs. (\ref{e5}) and (\ref{e6}) simultaneously. In particular,
for every \(d\)-level quantum state \(\rho\), there exist \(d\) unitary operations \(\{U_k\}_{k=0}^{d-1}\), such that 
\[\frac1d\sum_{k=0}^{d-1}U_k\rho U_k^{\dagger}=\frac{\mathbb{I}}d.\]
If \(\rho=\sum_{i=0}^{d-1}q_i\ketbra{\psi_i}{\psi_i}\), then a canonical choice of \(\{U_k\}_{k=0}^{d-1}\) are the all possible even permutations over the orthonormal basis \(\{\ket{\psi_i}\}_{i=0}^{d-1}\). We can therefore identify the FEC of an arbitrary quantum state \(\rho\in\mathcal{D}(\mathcal{H}_d)\) as
\[\mathcal{C}_{\text{purity}}(\rho)=S(\frac{\mathbb{I}}d)-S(\rho)=\log d-S(\rho).\]

By identifying the unitary operations -- which are the extreme points on the set of unital channels -- as the optimal free encoding, one may be tempted to ask whether the FEC of a quantum state \(\rho\) is always achievable by the extreme free operations of any arbitrary QRT? In the following, we will answer this question affirmatively, and the proof is depicted in Appendix \ref{pt2}.
\begin{theorem}\label{t2}
    For any QRT \(\mathcal{R}\) and for any arbitrary quantum state \(\rho\), the optimal ensemble of free encoding, achieving the FEC, is limited within the extreme points of the free operations of \(\mathcal{R}\).
\end{theorem}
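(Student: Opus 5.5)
The plan is to prove Theorem \ref{t2} by writing the Holevo quantity in Eq. (\ref{e2}) as a function of the ensemble of output states and using the convex structure of \(\mathbf{N}_{\mathcal{R}}\). For a fixed \(\rho\), the map \(\mathcal{N}\mapsto\mathcal{N}(\rho)\) is linear. If the free operations form a convex set, which holds whenever \(\mathcal{F}_{\mathcal{R}}\) is convex, then the set \(\mathbf{N}_{\mathcal{R}}\) is convex and compact (closed and bounded in finite dimensions). By Carathéodory/Krein--Milman, every \(\mathcal{N}_x\in\mathbf{N}_{\mathcal{R}}\) can then be written as a finite convex combination \(\mathcal{N}_x=\sum_j q_{j|x}\mathcal{E}_j\) with \(\mathcal{E}_j\) extreme points of \(\mathbf{N}_{\mathcal{R}}\).

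The key step is to replace each encoding \(\mathcal{N}_x\) by the refined ensemble \(\{p_xq_{j|x},\mathcal{E}_j\}\) indexed by \((x,j)\), and to show that the communication utility does not decrease. Using Lemma \ref{l1}, the average state is unchanged: \(\sum_{x,j}p_xq_{j|x}\mathcal{E}_j(\rho)=\sum_xp_x\mathcal{N}_x(\rho)=\overline{\rho}\). The remaining quantity is controlled by joint convexity of the relative entropy, which gives
\[
D(\mathcal{N}_x(\rho)\|\overline{\rho})\leq\sum_jq_{j|x}D(\mathcal{E}_j(\rho)\|\overline{\rho}).
\]
Summing over \(x\) with weights \(p_x\) yields \(\mathcal{C}_{\mathcal{R}}(\rho,\{p_x,\mathcal{N}_x\})\leq\mathcal{C}_{\mathcal{R}}(\rho,\{p_xq_{j|x},\mathcal{E}_j\})\). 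An equivalent route avoids Lemma \ref{l1}: the first term \(S(\overline{\rho})\) is unchanged, and concavity of the von Neumann entropy gives \(\sum_jq_{j|x}S(\mathcal{E}_j(\rho))\leq S(\mathcal{N}_x(\rho))\). I would present the entropy version as the main argument, since it is shortest.

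It follows that any ensemble attaining the maximum in Eq. (\ref{e2}) can be converted into one supported only on extreme free operations with at least the same value. Hence the supremum over extreme-point ensembles equals \(\mathcal{C}_{\mathcal{R}}(\rho)\). To argue that the maximum is actually attained, the number of ensemble elements can be bounded by \(d^2\) (or \(d^2+1\)) using a Carathéodory argument on the output states, exactly as for the Holevo capacity. Compactness of the closure of the extreme set then gives attainment. If extreme points are not closed, one settles for ensembles on the closure, and otherwise states the result as a supremum, in line with the asymptotic phrasing in the introduction.

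I expect the main obstacle to be the structural assumptions rather than the inequality itself. The argument needs \(\mathbf{N}_{\mathcal{R}}\) to be convex and compact so that every free operation decomposes into extreme free operations. Convexity follows from convexity of \(\mathcal{F}_{\mathcal{R}}\) via the defining condition \(\mathcal{N}(\sigma)\in\mathcal{F}_{\mathcal{R}}\). Closedness requires \(\mathcal{F}_{\mathcal{R}}\) to be closed, and I would state that assumption explicitly. For non-convex free sets, the theorem would be read relative to the convex hull, or noted as outside its scope.
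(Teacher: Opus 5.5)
Your proposal is correct and follows the paper's own argument: decompose each non-extreme encoding into extreme free operations, note that the average state $\overline{\rho}$ is unchanged by this refinement, and use convexity of $D(\cdot\|\overline{\rho})$ in its first argument (equivalently, as you prefer, concavity of $S$) to see that the Holevo quantity cannot decrease. Your existence form of the conclusion (some optimal ensemble is supported on extreme points) is more accurate than the paper's claimed ``contradiction'', since that inequality need not be strict, and your explicit convexity and compactness assumptions on $\mathbf{N}_{\mathcal{R}}$ state what the paper uses tacitly; attainment on genuine extreme points follows directly by refining an optimal ensemble over the compact set $\mathbf{N}_{\mathcal{R}}$, so your worry about whether the extreme set is closed is unnecessary.
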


Theorem \ref{t2} can also be interpreted as the encoding analogue of the optimization of the traditional Holevo quantity for an ensemble of quantum state \(\{p_x,\rho_x\}\). Precisely, the accessible information of any quantum channel is the optimized Holevo quantity for all possible input ensemble to the channel. However, it is always sufficient to consider the optimization over the pure quantum states only \cite{holevo2019quantum}.   

\textit{Beyond Pointed QRTs.} -- While the FEC has been established as a potential resource measure for any QRT, we have seen that for the pointed one, it naturally emerges as a faithful one. But what about other resource theories? Is there a suitable rescaling of FEC such that it could be a potentially faithful one? To answer this question, we will first show that faithfulness cannot be achieved for every possible QRT in general.
\begin{proposition}\label{p6}
    For any QRT \(\mathcal{R}\), containing a complete orthonormal basis \(\{\ket{\psi_i}\}_{i=0}^{d-1}\) in the set of free states \(\mathcal{F}_{\mathcal{R}}\), FEC cannot be a faithful resource measure.
\end{proposition}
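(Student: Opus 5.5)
Faithfulness means that \(\mathcal{C}_{\mathcal{R}}(\rho)=0\) only when \(\rho\in\mathcal{F}_{\mathcal{R}}\). To refute it, we show that every free state already has strictly positive FEC. Such a quantity cannot vanish exactly on \(\mathcal{F}_{\mathcal{R}}\), and no rescaling that is zero on free states can fix it. The plan is to take a free state that is a basis element, say \(\sigma=\ketbra{\psi_0}{\psi_0}\in\mathcal{F}_{\mathcal{R}}\), and build free encodings that send it to \(d\) orthogonal states.

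The key construction is the family of replacement (preparation) channels \(\mathcal{N}_i(X)=\Tr(X)\ketbra{\psi_i}{\psi_i}\) for \(i=0,\dots,d-1\). Each \(\mathcal{N}_i\) is CPTP, with Kraus operators \(\ket{\psi_i}\bra{\psi_j}\). Its image is the free state \(\ketbra{\psi_i}{\psi_i}\), so \(\mathcal{N}_i(\mathcal{F}_{\mathcal{R}})\subseteq\mathcal{F}_{\mathcal{R}}\), and by the definition of \(\mathbf{N}_f^{\mathcal{R}}\) it is free. The same holds for stochastic freeness, since each Kraus branch maps any state to a multiple of a free state.

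Next take the uniform ensemble \(p_i=1/d\) in Eq.~(\ref{e2}). For \emph{any} input \(\rho\), including a free one, this gives \(\mathcal{N}_i(\rho)=\ketbra{\psi_i}{\psi_i}\). The average is \(\overline{\rho}=\mathbb{I}/d\), and each output has zero entropy. The Holevo quantity is therefore \(\log d\), and with Proposition~\ref{p1} we get \(\mathcal{C}_{\mathcal{R}}(\rho)=\log d\) for every \(\rho\), free or not. If \(\mathcal{F}_{\mathcal{R}}\) is a proper subset of the states (a nontrivial QRT), FEC is constant on all states. It is then non-faithful in the strongest sense: it cannot separate free from resourceful states, and no affine rescaling that is zero on \(\mathcal{F}_{\mathcal{R}}\) can make it faithful.

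The only delicate point is justifying that the replacement channels are free. This follows directly from the paper's maximal definition of \(\mathbf{N}_f^{\mathcal{R}}\) as resource non-generating maps. If one adopted a stricter class of free operations, one would instead need to check that preparing a free state is allowed. For the pointed case this construction fails, because only one free state exists and no \(d\) orthogonal free outputs are available; that is consistent with Proposition~\ref{p5}.
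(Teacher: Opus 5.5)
Your proposal is correct and is essentially the paper's proof. Both use the replacement channels onto the free orthonormal basis with uniform weights, so every input attains \(\log d\), and together with the bound of Proposition~\ref{p1} this makes FEC constant and hence non-faithful. Your explicit checks that these preparation channels are resource non-generating (and even stochastically free), and that the matching upper bound comes from Proposition~\ref{p1}, only spell out what the paper leaves implicit.
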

\begin{proof}
    The proof is trivial. Let us consider an ensemble of free operations \(\{p_x=\frac 1d, \mathcal{N}_x\}\), where \(\mathcal{N}_x(\rho)=\ketbra{\psi_x}{\psi_x},~\forall\rho\). For every possible input quantum state \(\rho\in\mathcal{D}(\mathcal{H}_d)\), using this ensemble of free encoding, Alice can send \(d\)-orthogonal quantum states to Bob and hence
    \[\mathcal{C}_{\mathcal{R}}(\rho)=\log d.\]
    Therefore, any rescaling of FEC for the free states, will also trivialize the measure as a constant for all quantum states.
\end{proof}
While there are several examples of such QRTs, viz., coherence, entanglement, falling under this category, one could have another class of QRTs which are neither pointed, nor containing any full orthonormal basis as the free states. For example, consider the resource theory of thermodynamics in absence of a thermal bath --- namely the activity \cite{swati2023resource}. In Appendix \ref{act}, we have shown that under suitable scaling of FEC, it is in general possible to come up with a valid resource measure for the resource theory of activity.

\textit{Conclusions.} -- We have established that the elegant notion of classical information processing by quantum systems emerges as a universal unit of quantum resources. Moreover, whenever the resource theory admits a unique free state, the resource unit serves as a faithful one. This suggests that any instance of classical information processing via quantum transmission lines costs several quantum resources. In fact, for every practical premise of classical communication via quantum channels one should associate a list of costs for each individual resource-theoretic signatures the quantum system bears. This draws a natural conclusion about the true resource-theoretic cost for communicating an elegant classical information in the real-life scenario: classical communication never comes as free. In other way, it demonstrates a clear separation between a classical channel and a quantum channel, enabling communication of classical information only. This further instigates an interesting question about the resource-cost to stimulate any quantum channel with no private or quantum capacity, however allowing successful classical communication. 

Besides its information-theoretic implications, our results pave the way for a new outlook on characterizing quantum resources universally. One of the most stimulating directions stemming from our work involves the question of the additivity of a quantum resource measure. More precisely, while our analysis mimics the Holevo settings, i.e., with product encoding - joint decoding scenario for the classical information processing, the question remains open to investigate the instances of Hastings-like scenario, under both the joint encoding-decoding settings \cite{hastings2009superadditivity}. Addressing this question, consequently, could answer whether the FEC for a given QRT is super-additive or not. This, in turn, would highlight about the possibilities of decreasing the consumed quantum resources to communicate classical information jointly. On the other hand, while our analysis reveals the potential of FEC to be a perfect resource measure for any pointed resource theory and some of the other QRTs, e.g., activity, it would be interesting to impose physically motivated constraints over the encoding operations further to extend it faithfully over arbitrary quantum resource theories.

\textit{Acknowledgments.}-- TG is supported by the Slovak Research and Development Agency through Grant no. APVV-22-0570, the Scientific Grant Agency of the Ministry of Education, Slovak Republic through Grant no. VEGA 2/0128/24 and the Štefan Schwarz Support Fund 2025/OV1/046 by the Slovak Academy of Sciences.
\bibliography{Bibliography}
\section*{Appendix}
\subsection{Proof of Corollary \ref{c1}}\label{pc1}
The proof is  a simple implication of Proposition \ref{p3} and here we will use the same notation as therein. 
Let us consider a CPTP map \(\mathcal{E}\) on \(\rho\in\mathcal{D}(\mathcal{H}_d)\), with free Kraus operators \(\{L_j\}_{j=0}^{N-1}\) and \(\sum_{j=0}^{N-1}L_j^{\dagger}L_j=\mathbb{I}_{d}\), that is, \(\forall j,~L_j\rho_{F} L_j^{\dagger}\propto\rho_{F}\) with \(\rho_{F}\) being the free state. Then, we can write
\begin{align}
\nonumber\mathcal{E}(\rho)&=\sum_{j\neq k}L_j\rho L_j^{\dagger}+L_k\rho L_k^{\dagger}\\\nonumber&=(1-p_k)\tau^{\rho} + p_k \sigma_k^{\rho},
\end{align}
where, \(\tau^{\rho}=\frac{\sum_{j\neq k}L_j\rho L_j^{\dagger}}{\sum_{j\neq k}\Tr(L_j\rho L_j^{\dagger})}\) and \(\sigma^{\rho}_k=\frac{L_k\rho L_k^{\dagger}}{\Tr(L_k\rho L_k^{\dagger})}\) with \(p_k=\Tr(L_k\rho L_k^{\dagger})\).

Proposition \ref{p3} then implies that
\begin{align}
\nonumber&\mathcal{C}_{\mathcal{R}}(\rho)\geq p_k\mathcal{C}_{\mathcal{R}}(\sigma_k^{\rho})\\\nonumber\implies& p_k\leq \frac{\mathcal{C}_{\mathcal{R}}(\rho)}{\mathcal{C}_{\mathcal{R}}(\sigma_k^{\rho})}
\end{align}
Therefore, any quantum state \(\rho\) can be transformed to \(\sigma\), under the stochastic free operations with a probability no more than \(\frac{\mathcal{C}_{\mathcal{R}}(\rho)}{\mathcal{C}_{\mathcal{R}}(\sigma)}\). Using the trivial probability bound, we can then further rewrite,
\[\Pr(\rho\xrightarrow[\mathcal{R}]{}\sigma)\leq\min\Bigl\{ \frac{\mathcal{C}_{\mathcal{R}}(\rho)}{\mathcal{C}_{\mathcal{R}}(\sigma)}, 1\Bigr\}\]
Note that for the class of resource theories \(\mathcal{R}\) where a complete set of orthonormal basis is included in \(\mathcal{F}_{\mathcal{R}}\), the above bound is trivial, since \(\mathcal{C}_{\mathcal{R}}(\rho)\) is constant for every quantum state \(\rho\). 

\subsection{Proof of Lemma \ref{l1}}\label{pl1}
Let us begin with the identity for the relative entropy between two quantum states \(\mathcal{N}_x(\rho)\) and \(\overline{\rho}\). Since \(\overline{\rho}=\sum_xp_x\mathcal{N}_x(\rho)\), we have \(\text{Supp}(\mathcal{N}_x(\rho))\subseteq\text{Supp}(\overline{\rho})\) for every \(x\). This further implies
\begin{align*}
   D(\mathcal{N}_x(\rho)||\overline{\rho})=-S(\mathcal{N}_x(\rho))-\Tr(\mathcal{N}_x(\rho)\log\overline{\rho}).
\end{align*}
Therefore,
\begin{align}\label{e4}
\nonumber\sum_xp_xD(&\mathcal{N}_x(\rho)||\overline{\rho})\\\nonumber&=-\sum_xp_xS(\mathcal{N}_x(\rho))-\sum_x p_x\Tr[\mathcal{N}_x(\rho)\log\overline{\rho}]\\\nonumber&=-\sum_xp_xS(\mathcal{N}_x(\rho))-\Tr[\sum_xp_x\mathcal{N}_x(\rho)\log \overline{\rho}]\\\nonumber&=-\sum_xp_xS(\mathcal{N}_x(\rho))-\Tr(\overline{\rho}\log\overline{\rho})\\&=-\sum_xp_xS(\mathcal{N}_x(\rho)) + S(\overline\rho)
\end{align}
Finally, we complete the proof by replacing Eq. (\ref{e4}) in Eq. (\ref{e2}).

\subsection{Proof of Proposition \ref{p1}}\label{pp1}
    The proof is rather straightforward. Thanks to the seminal result by Holevo \cite{Holevo1973BoundsFT}, a lone \(d\)-level quantum system cannot communicate more than \(\log d\)-bits of classical information. In fact, for every possible input quantum state \(\rho\in\mathcal{D}(\mathcal{H}_d)\) and a perfect \(d\)-level quantum transmission line, consider \(d\) number of encoding operations \(\{\mathcal{N}_j\}\), such that, \(\mathcal{N}_j(\rho)=\ketbra{j}{j},~\forall\rho\). Now, by choosing all these states \(\{\ket{j}\}_{j=0}^{d-1}\) to be orthogonal, Bob can distinguish them perfectly by a suitable choice of the measurement and hence can extract exactly \(\log d\)-bits of information.  Therefore, for any restricted class of encoding operations \(\mathbf{N}_R\), we have
    \[\mathcal{C}_R(\rho)\leq\mathcal{C}(\rho)=\log d,\]
    where, \(\mathcal{C}(\rho)\) denotes the usual classical capacity with unrestricted encoding operations.

\subsection{Proof of Proposition \ref{p2}}\label{pp2}
    Let us consider the ensemble \(\{q_x^*,\mathcal{M}_x^*\}\) of quantum operations, where \(\mathcal{M}_x^*\in\mathbf{N}_{\mathcal{R}},~\forall x\), to be the optimized encoding for the input state \(\mathcal{N}(\rho)\), achieving the FEC \(\mathcal{C}_{\mathcal{R}}(\mathcal{N}(\rho))\). Therefore,
    \[\mathcal{C}_{\mathcal{R}}(\mathcal{N}(\rho))=S(\sum_x q_x^*\mathcal{M}_x^*(\mathcal{N}(\rho)))-\sum_xq_x^*S(\mathcal{M}_x^*(\mathcal{N}(\rho))).\]
    Any pair of free operations of the QRT \(\mathcal{R}\), must be free under concatenation, i.e., \(\mathcal{M}_x\circ\mathcal{N}\in\mathcal{N}_{\mathcal{R}},\) whenever \(\mathcal{M}_x,~\mathcal{N}\in\mathcal{N}_{\mathcal{R}}\). Therefore,
    \begin{align*}
        \mathcal{C}_{\mathcal{R}}(\mathcal{N}(\rho))&= S(\sum_x q_x^*\mathcal{M}_x^*\circ\mathcal{N}(\rho))-\sum_xq_x^*S(\mathcal{M}_x^*\circ\mathcal{N}(\rho))\\&=S(\sum_x q_x^*\Tilde{\mathcal{N}_x}(\rho))-\sum_xq_x^*S(\tilde{\mathcal{N}_x}(\rho))\\&=\mathcal{C}_{\mathcal{R}}(\rho,\{q_x^*,\Tilde{\mathcal{N}_x}\})
        \\&\leq \mathcal{C}_{\mathcal{R}}(\rho)
    \end{align*}
    Here, the operation \(\tilde{\mathcal{N}_x}=\mathcal{M}_x^*\circ\mathcal{N}\) is a free operation for all \(x\). This completes the proof.

\subsection{Proof of Proposition \ref{p3}}\label{pp3}
Let us first consider a classical information processing scenario involving a \(d\)-dimensional quantum channel \(\mathcal{N}_d\), along with a \(d'\)-level perfect classical channel \(I_{d'}\). If the random variables associated with the channels are denoted as \(X\) and \(Y\) respectively, then the joint capacity of the channels will take the form 
\begin{align}\label{e9}
    \nonumber\mathcal{C}(\mathcal{N}_d\otimes I_{d'})&=\max_{\mu_x,\rho_x,q_y}\chi(\sum_{x,y}\mu_xq_y\mathcal{N}_d(\rho_x)\otimes \ketbra{y}{y})\\&=\max_{\mu_x,\rho_x}\chi(\sum_x\mu_x\mathcal{N}_d(\rho_x))+\max_{q_y}H(q_y).
\end{align}
Here, we have used the notation \(\{\ketbra{y}{y}\}_{y=0}^{d'-1}\) to represent the classical random variable \(Y\) in terms of orthogonal quantum states. Also Eq. (\ref{e9}) can be seen as a special case of the additivity of classical capacity for any quantum channel with an entanglement-breaking channel \cite{shor2002additivity}. Interestingly, it also implies that the joint capacity of the channels involves individual maximization of the quantum and the classical channel. 

Therefore, if the classical channel intends to send a random variable following a specific probability distribution \(\{p_k\}\), then the maximal amount of classical information one can jointly communicate in such a scenario is \(\mathcal{C}(\mathcal{N}_d)+H(p_k)\). This further implies that in the context of free encoding for a given input quantum state \(\rho\), along with a specific classical random variable following the probability distribution \(\{p_k\}\), the maximum amount of information that can be communicated is given by
\begin{align}\label{e10}
\mathcal{I}_{\max}(\rho)=\mathcal{C}_{\mathcal{R}}(\rho)+H(p_k),
\end{align}
where, \(\mathcal{C}_{\mathcal{R}}(\rho)\) is the FEC of the quantum state \(\rho\).

Let us now consider an isometry \(\mathcal{V}:\mathcal{H}_d\to\mathcal{H}_d\otimes\mathcal{H}_{d^2}\), such that for every \(\ket{\psi}\in\mathcal{H}_d\),
\[\mathcal{V}(\ket{\psi})\mapsto\sum_{k=0}^{d^2-1}L_k\ket{\psi}_1\otimes\ket{k}_2.\]
If the projective measurement \(\mathcal{M}_k=\{\ketbra{k}{k}\}_{k=0}^{d^2-1}\) is then performed on the second system and the system is replaced with a classical register, then for the input quantum state \(\rho\in\mathcal{D}(\mathcal{H}_d)\), the evolved quantum-classical state becomes 
\[\sigma_{qc}=\sum_{k=0}^{d^2-1} p_k(\sigma_k^{\rho})_1\otimes\ketbra{k}{k}_2,\]
where, \(\sigma_k^{\rho}=\frac 1{p_k}L_k\rho L_k^{\dagger}\) and \(p_k=\Tr(L_k\rho L_k^{\dagger})\). 

Now, if the Kraus operators \(\{L_k\}\) are such that \(L_k\rho_FL_k^{\dagger}\in\mathcal{F}_{\mathcal{R}}\), for every \(\rho_F\in\mathcal{F}_{\mathcal{R}}\), then the sequence of operations \(\Tr_2\circ(\mathbb{I}_1\otimes{\mathcal{M}_k}_2)\circ\mathcal{V}\) can be treated as a valid free operation on the input quantum state \(\rho\).
    \begin{figure}
    \centering
    \includegraphics[width=1.15\linewidth]{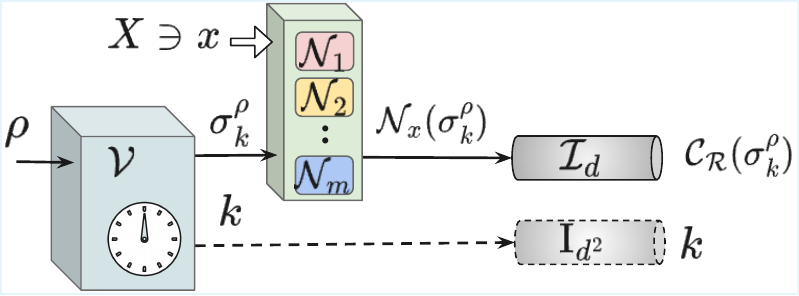}
    \caption{(Color online)\textit{Strong monotonicity of FEC under stochastic free operations.} The input quantum state \(\rho\), after the isometry \(\mathcal{V}\) and the post-selection of the \(k^{\th}\) outcome on the ancilla becomes \(L_k\rho L_k^{\dagger}\propto\sigma_k^{\rho}\). Then the input information \(x\in X\) is encoded on \(\sigma_k^{\rho}\) by a free operation \(\mathcal{N}_x\) and sent through the quantum channel \(\mathcal{I}_d\). As supplementary information, the classical index \(k\) is transmitted through the classical channel \(\text{I}_{d^2}\), shown with the dotted line. The receiver, having the classical index \(k\) in hand, can extract optimally \(\mathcal{C}_{\mathcal{R}}(\sigma_k^{\rho})\) amount of information from the quantum channel.}
    \label{f3}
\end{figure}
Therefore, the sender, Alice, can communicate the symbols \(\{k\}_{k=0}^{d^2-1}\) via a \(d^2\)-dimensional classical channel, and for each of the symbols \(k\in\{0,\cdots,d^2-1\}\), she can utilize the quantum channel to send the state \(\sigma_k^{\rho}\) by applying an ensemble of free operations \(\{q_x^k,\mathcal{N}_x^k\}\) (see Fig. \ref{f3}). Under such a scenario, the effective quantum-classical state Alice can communicate takes the form
\[\tau_{qc}=\sum_{k,x}p_kq_x^k(\mathcal{N}_x^k(\sigma_k^{\rho}))_1\otimes\ketbra{k}{k}_2.\]
Accordingly, the amount of information Bob obtains from the state \(\tau_{qc}\) is given by
\begin{align*}
    \nonumber\chi(\tau_{qc})&=S(\tau_{qc})-\sum_{k,x}p_kq_x^kS(\mathcal{N}_x^k(\sigma_k^{\rho}))\\\nonumber&=H(p_k)+\sum_kp_k S(\sum_xq_x^k\mathcal{N}_x^k(\sigma_k^{\rho}))-\sum_{k,x}p_kq_x^kS(\mathcal{N}_x^k(\sigma_k^{\rho}))\\&=H(p_k)+\sum_kp_k\chi(\sum_xq_x^k\mathcal{N}_x^k(\sigma_k^{\rho})).
\end{align*}
Maximizing over all possible free encoding ensembles \(\{q_x^k,\mathcal{N}_x\}\), we obtain the maximum amount of information that can be conveyed to Bob
\begin{align}\label{e11}
\nonumber\max_{q_x^k,\mathcal{N}_x^k}\chi(\tau_{qc})&=H(p_k)+\sum_kp_k\max_{q_x^k,\mathcal{N}_x^k}\chi(\sum_xq_x^k\mathcal{N}_x^k(\sigma_k^{\rho}))\\&=H(p_k)+\sum_kp_k\mathcal{C}_R(\sigma_k^{\rho}).
\end{align}
Recall that the derivation of Eq. (\ref{e11}), involves only the possible compositions of various free operations on the input quantum state \(\rho\) and a \(d^2\)-dimensional classical channel to communicate a classical random variable with probability distribution \(\{p_k\}\). Therefore, from Eq. (\ref{e10}) we can conclude,
\begin{align*}
    \mathcal{I}_{\max}(\rho)&\geq\max_{q_x^k,\mathcal{N}_x^k}\chi(\tau_{qc})\\\implies\mathcal{C}_{\mathcal{R}}(\rho)&\geq\sum_kp_k\mathcal{C}_R(\sigma_k^{\rho}).
\end{align*}
\subsection{Proof of Proposition \ref{p4}}\label{pp4}
    The proof involves the expression we have derived in Lemma \ref{l1} and the joint convexity of the relative entropy. Precisely, if \(\{q_x^*,\mathcal{N}_x^*\}\) be the encoding ensemble which optimizes the FEC of the state \(\sum_kp_k\rho_k\), we define 
    \begin{align*}
        \overline{\sigma}&:=\sum_{x}q_x^*\mathcal{N}_x^*(\sum_kp_k\rho_k)\\&=\sum_xq_x^*\sum_kp_k\mathcal{N}_x^*(\rho_k)\\&=\sum_kp_k\sum_xq_x^*\mathcal{N}_x^*(\rho_k)\\&=\sum_kp_k\overline{\sigma}_k,
    \end{align*}
    where, \(\overline{\sigma}_k:=\sum_xq_x^*\mathcal{N}_x^*(\rho_k)\).
    Therefore, we can write
    \begin{align*}
        \mathcal{C}_{\mathcal{R}}(\sum_kp_k\rho_k)&=\sum_x q_x^*D(\mathcal{N}_x^*(\sum_kp_k\rho_k)||\overline{\sigma})\\&=\sum_xq_x^*D(\sum_kp_k\mathcal{N}_x^*(\rho_k)||\sum_kp_k\overline{\sigma_k})\\&\leq\sum_xq_x^*\sum_kp_kD(\mathcal{N}_x^*(\rho_k)||\overline{\sigma_k})\\&
        =\sum_kp_k\sum_xq_x^*D(\mathcal{N}_x^*(\rho_k)||\overline{\sigma_k})\\&=\sum_k p_k\mathcal{C}_{\mathcal{R}}(\rho_k,\{q_x^*,\mathcal{N}_x^*\})\\&\leq\sum_kp_k\mathcal{C}_{\mathcal{R}}(\rho_k).
    \end{align*}
    Here, the second equality follows by adopting the linearity of the CPTP maps and the first inequality involves the joint convexity of the quantum relative entropy. The expression \(\mathcal{C}_{\mathcal{R}}(\rho_k,\{q_x^*,\mathcal{N}_x^*\})\), in the fourth equality, denotes the Holevo quantity of the state \(\rho_k\), under the encoding ensemble \(\{q_x^*,\mathcal{N}_x^*\}\) (see Lemma \ref{l1}). Finally, the last inequality follows from the fact that the quantity \(\mathcal{C}_{\mathcal{R}}(\rho_k)\) involves maximization over all possible ensembles of the free operations \(\{p_x,\mathcal{N}_x\in\mathbf{N}_{\mathcal{R}}\}\).

\subsection{Proof of Proposition \ref{p5}}\label{pp5}
    Since every free operation \(\mathcal{N}_x\in\mathbf{N}_f\) of the pointed QRT \(\mathcal{R}\) preserves the unique free state \(\rho_F\), no free encoding on Alice's side can change the given state \(\rho_F\). Hence, no information can be communicated from Alice to Bob, making \(\mathcal{C}_{\mathcal{R}}(\rho_F)=0\).

    On the other hand, note that for any such QRT, there exists a free operation pinning every quantum state to \(\rho_F\), i.e., \(\mathcal{N}_0(\rho)=\rho_F,~\forall\rho\). Therefore, for every resourceful quantum state \(\rho\), there exist at least two distinct free operations, \(\mathcal{I}(\rho)=\rho\) and \(\mathcal{N}_0(\rho)=\rho_F\), which produce two distinct states at Bob's side. So, with at least these two free operations, we can have 
    \[\mathcal{C}_{\mathcal{R}}(\rho)\geq \max_p [p D(\rho||\overline{\rho})+(1-p) D(\rho_F||\overline{\rho})]>0,\]
    where, \(\overline{\rho}=p\rho+(1-p)\rho_F\).

    Therefore, \(\mathcal{C}_{\mathcal{R}}(\rho)=0\Leftrightarrow\rho=\rho_F\) and hence completes the proof.

\subsection{Proof of Theorem \ref{t2}}\label{pt2}
    Let us assume that there exists an optimal ensemble of free encoding operations \(\{p_x,\mathcal{N}_x\in\mathbf{N}_{\mathcal{R}}\}\) to achieve the FEC of the quantum state \(\rho\), such that one of the operations \(\mathcal{N}_z\) is not an extreme free element of the set \(\mathbf{N}_{\mathcal{R}}\), i.e., \(\mathcal{N}_z\notin\text{Ext.}(\mathbf{N}_{\mathcal{R}})\). Then, 
    \[\mathcal{N}_z(\rho)=\sum_kq_z^k\mathcal{N}_z^k(\rho),      \quad \forall\rho,\]
    where \(\forall k,~\mathcal{N}_z^k\in \text{Ext.}(\mathbf{N}_{\mathcal{R}})\).

    Now, observe that
    \begin{align*} D(\mathcal{N}_z(\rho)||\overline{\rho})&=D(\sum_kq_z^k\mathcal{N}_z^k(\rho)||\overline{\rho})\\&\leq \sum_k q_z^kD(\mathcal{N}_z^k(\rho)||\overline{\rho}),
    \end{align*}
    where, \(\overline{\rho}=\sum_xp_x\mathcal{N}_x(\rho)\). Therefore, using Eq. (\ref{e3}) in Lemma \ref{l1}, we can write
    \begin{align*}
    \mathcal{C}_{\mathcal{R}}(\rho)&=\sum_xp_xD(\mathcal{N}_x(\rho)||\overline{\rho})\\&\leq \sum_{x\neq z}p_xD(\mathcal{N}_x(\rho)||\overline{\rho}) + p_z\sum_k q_z^kD(\mathcal{N}_z^k(\rho)||\overline{\rho}),
    \end{align*}
    which leads to a contradiction. Hence the optimal ensemble must belong to the set of extreme points of the free operations.

\subsection{Resource theory of Activity}\label{act}
The resource theory of activity is concerned with the amount of extractable work from an isolated quantum system, i.e., by allowing the unitary actions only. 

Let us consider a \(d\)-level quantum system \(\rho\in\mathcal{D}(\mathcal{H}_d)\), governed by the \(d\)-level system Hamiltonian \(H=\sum_i E_i\ketbra{i}{i}\), where \(E_i\leq E_{i+1}\) and \(E_i\) corresponds to the energy of the energy eigenstate \(\ket{i}\). Moreover, if \(\rho=\sum_k p_k\ketbra{\psi_k}{\psi_k}\) in its spectral representation with \(p_k\geq p_{k+1}\), then it is possible to lower its energy \(E(\rho)=\Tr(\rho H)\) further by applying a suitable unitary \(U=\sum_k \ketbra{k}{\psi_k}\) on it. When this energy is the lowest under any possible unitary evolution, \(U\) then the energy difference 
\[\Tr(\rho H)-\Tr(U\rho U^{\dagger}H)=:W\]
can be quantified as the amount of extractable work, termed as \textit{ergotropy} \cite{allahverdyan2004maximal}. 

Evidently, the energy of the state 
\[U\rho U^{\dagger}=\sum_k p_k \ketbra{k}{k}=:\rho_P\]
cannot be lowered further unitarily and hence it is regarded as a passive state in the literature \cite{lenard1978thermodynamical}.

To put it more formally, any quantum state diagonal in the energy eigenbasis along with an inverse population with the increasing energy levels is a passive state. That is, the state \(\sigma\in\mathcal{D}(\mathcal{H}_d)\), governed by the system Hamiltonian \(H\) (as defined earlier), will be a passive state if and only if it can be written as  
\begin{align}\label{e7}
\sigma=\sum_{k=0}^{d-1}q_k\ketbra{k}{k},\text{ where, } ~q_{k+1}\leq q_k \forall k.
\end{align}

 The set of passive states \(\mathcal{P}_d\) is convex and compact; and any state residing outside the set can be regarded as a resource, namely an \textit{active} state. On the other hand, the free operations for such a resource-theoretic framework can be defined in various ways \cite{singh2021partial, swati2023resource}, however, here we will consider the weakest variant of all such which preserves the set of passive states. Within this set of free operations \(\mathbf{N}_{\text{activity}}\) we will now show that the FEC (with proper scaling) of an arbitrary quantum state can be identified as a faithful resource measure. While our analysis holds for any arbitrary dimension, here we illustrate the qubit case.

We begin with the estimation of the FEC exactly for the free states, i.e., the passive qubits.
\begin{lemma}\label{l2}
    The FEC \(\mathcal{C}_{\text{activity}}(\rho)\simeq0.322\) for all \(\rho\in\mathcal{P}_2\).
\end{lemma}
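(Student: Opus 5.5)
The plan is to pin the value of $\mathcal{C}_{\text{activity}}(\rho)$ for passive $\rho$ between matching upper and lower bounds. Both bounds rest on one observation. For $\rho\in\mathcal{P}_2$, every $\mathcal{N}\in\mathbf{N}_{\text{activity}}$ maps $\rho$ back into $\mathcal{P}_2$ by the defining property of free operations. Hence every output $\mathcal{N}_x(\rho)$ of a free encoding is a diagonal state $\mathrm{diag}(q_x,1-q_x)$ with $q_x\in[1/2,1]$. All outputs commute, so the Holevo quantity of Eq. (\ref{e2}) reduces to the classical mutual information $I(X:Y)$. Here $Y$ is a bit and the "channel" sends $x$ to the distribution $(q_x,1-q_x)$. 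The problem thus becomes a classical capacity problem whose admissible output rows lie in the segment $\{(q,1-q):q\in[1/2,1]\}$.

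\textbf{Upper bound.} I would reduce to the two endpoints of this segment, namely $\ketbra{0}{0}$ and $\mathbb{I}/2$, as in the proof of Theorem \ref{t2}. For a fixed input distribution $p_x$, mutual information is convex in the channel. Equivalently, by Lemma \ref{l1} and convexity of $D(\cdot\|\overline{\rho})$ in its first argument, one can split any row $q_x\in(1/2,1)$ into a convex combination of the rows $q=1$ and $q=1/2$ without decreasing $I(X:Y)$. Since relative entropy is jointly convex, refining the labels in this way does not lower the total $\sum_x p_x D(\mathcal{N}_x(\rho)\|\overline{\rho})$, even though the average $\overline{\rho}$ is unchanged. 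After merging equal rows, the optimum is therefore attained by an ensemble $\{p\,\ketbra{0}{0},(1-p)\,\mathbb{I}/2\}$, a Z-channel with crossover $1/2$. For this ensemble,
\[
\chi(p)=h\!\left(\tfrac{1-p}{2}\right)-(1-p),
\]
where $h$ is the binary entropy. Setting $t=1-p$ and differentiating gives $\tfrac12\log\frac{2-t}{t}=1$, so $t=2/5$. The maximum is therefore $h(1/5)-2/5=\log 5-2=\log(5/4)\simeq0.322$.

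\textbf{Lower bound.} I would exhibit free encodings that attain this value for every passive $\rho$, independently of $\rho$. The two replacement channels $\mathcal{N}_0(\cdot)=\ketbra{0}{0}\Tr(\cdot)$ and $\mathcal{N}_1(\cdot)=\frac{\mathbb{I}}{2}\Tr(\cdot)$ output passive states on every input. In particular they preserve $\mathcal{P}_2$, so they belong to $\mathbf{N}_{\text{activity}}$. Using them with weights $3/5$ and $2/5$ gives $\chi=\log(5/4)$ for any input, passive or not. Combining the two bounds yields $\mathcal{C}_{\text{activity}}(\rho)=\log(5/4)\simeq0.322$ for all $\rho\in\mathcal{P}_2$.

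\textbf{Expected obstacle.} The step needing most care is the extreme-point reduction in the upper bound. One must check that splitting a row into endpoint rows is a legitimate refinement of the ensemble, with the same average state and new labels. One must also check that only the two endpoints, and not some other boundary of the output set, are extremal. This holds because the passive qubit outputs form a one-dimensional segment whose only extreme points are $\ketbra{0}{0}$ and $\mathbb{I}/2$.
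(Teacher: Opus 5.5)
Your proposal is correct and follows essentially the paper's route: you reduce to encodings whose outputs are the two extreme passive states $\ketbra{0}{0}$ and $\mathbb{I}/2$, realize them by the two replacement channels, and optimize the weight, with the optimum at $3/5$. Your justification of the reduction, via convexity of $D(\cdot\|\overline{\rho})$ applied to outputs in the segment $\mathcal{P}_2$, is more direct than the paper's appeal to Theorem~\ref{t2}, which concerns extreme points of the set of free operations rather than of the output set; you also obtain the exact value $\log(5/4)$ where the paper reports the numerical $0.322$.
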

\begin{proof}
    Using Eq. (\ref{e7}), we can conclude that \(\{\rho_0=\ketbra{0}{0},\rho_1=\frac{\mathbb{I}}2\}\) are the extreme points of the set of passive states \(\mathcal{P}_2\). Therefore, for any \(\rho\in\mathcal{P}_2\) and for any \(\mathcal{N}\in\mathbf{N}_{\text{activity}}\), we have
    \[\mathcal{N}(\rho)=p\rho_0+(1-p)\rho_1,\quad 0\leq p\leq 1.\]

    Thanks to Theorem \ref{t2}, we can now identify the channels \(\mathcal{N}_0(\rho)=\ketbra{0}{0}\) and \(\mathcal{N}_1(\rho)=\frac{\mathbf{I}}2\) --- pinning every quantum states to \(\ket{0}\) and \(\frac{\mathbf{I}}2\) respectively --- as the potential free operations to encode the classical messages \(x\in\{0,1\}\). Therefore,
    \begin{align}
    \nonumber\mathcal{C}_{\text{activity}}(\rho)&=\max_q [S(\overline{\rho_q})-qS(\rho_0)-(1-q) S(\rho_1)]
    \\\nonumber&=\max_q [S(\overline{\rho_q})-(1-q)],
    \end{align}
    where, \(\overline{\rho_q}=q\rho_0+(1-q)\rho_1\), \(\rho_0=\ketbra{0}{0}\) and \(\rho_1=\frac{\mathbb{I}}{2}\).
    A simple numerical exercise then reveals that the optimal value is \(q= 0.6\) and hence, \(\mathcal{C}_{\text{activity}}(\rho)\simeq 0.322\) for every qubit passive state \(\rho\in\mathcal{P}_2\) and denote it as \(\mathcal{C}_{\text{activity}}(\rho_{\text{passive}})\) (see Fig \ref{f2}(a)).
\end{proof}
We will now show that for every active qubit the FEC is strictly greater than the FEC of the passive states.
\begin{proposition}\label{p7}
    For every active state \(\rho_a\notin\mathcal{P}_2\), 
    \[\mathcal{C}_{\text{activity}}(\rho_a)>\mathcal{C}_{\text{activity}}(\rho_p),\]
    where, \(\rho_p\in\mathcal{P}_2\).
\end{proposition}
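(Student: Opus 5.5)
The plan is to exhibit, for every active qubit $\rho_a$, a free encoding ensemble whose Holevo quantity is strictly larger than the passive value $\chi^*:=\mathcal{C}_{\text{activity}}(\rho_{\text{passive}})\simeq 0.322$. The constant channels $\mathcal{N}_0(\cdot)=\ketbra{0}{0}\Tr(\cdot)$ and $\mathcal{N}_1(\cdot)=\frac{\mathbb{I}}{2}\Tr(\cdot)$ are free and act identically on every input. So the optimal passive ensemble $\{q^*=0.6,\ \mathcal{N}_0;\ 1-q^*,\ \mathcal{N}_1\}$ of Lemma~\ref{l2} already gives $\mathcal{C}_{\text{activity}}(\rho_a)\geq\chi^*$. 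Its average output is $\sigma^*:=\overline{\rho_{q^*}}=\mathrm{diag}(0.8,0.2)$. Strictness will come from adding a third free encoding $\mathcal{N}_2$ with a small weight $\epsilon$.

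\emph{Perturbation step.} Use the form of Lemma~\ref{l1} and write $\omega:=\mathcal{N}_2(\rho_a)$. Mixing in $\omega$ with weight $\epsilon$ changes the Holevo quantity, to first order in $\epsilon$, by $\epsilon\,[D(\omega\|\sigma^*)-\chi^*]$. This is the standard first-order expansion of $S(\bar\rho)-\sum_xp_xS(\rho_x)$: the derivative of $S$ at $\sigma^*$ contributes $-\Tr[(\omega-\sigma^*)\log\sigma^*]$. Because $q^*$ is an interior optimum of the two-state problem, the two divergences coincide, $D(\ketbra{0}{0}\|\sigma^*)=D(\frac{\mathbb{I}}{2}\|\sigma^*)=\chi^*$. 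It therefore suffices to find a free $\mathcal{N}_2$ with
\begin{equation*}
D(\mathcal{N}_2(\rho_a)\|\sigma^*)>\chi^*=D(\tfrac{\mathbb{I}}{2}\|\sigma^*).
\end{equation*}
I will check $D(\frac{\mathbb{I}}{2}\|\sigma^*)=\chi^*$ numerically as a sanity check. Both values equal $\log 5-2\approx0.322$.

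\emph{Construction of $\mathcal{N}_2$.} A qubit is active iff it is either population-inverted in the energy basis or has nonzero coherence. In both cases I aim to produce an inverted state $\omega_t=\mathrm{diag}(\frac12-t,\frac12+t)$ with $t>0$.
\begin{itemize}
\item If $\rho_a$ is diagonal with inverted populations, take $\mathcal{N}_2=\mathrm{id}$.
\item If $\rho_a$ has coherence, let $\{\ket{\pm_\phi}\}$ be the equatorial basis aligned with the phase of $\bra{0}\rho_a\ket{1}$, so that the two outcome probabilities differ. Define the measure-and-prepare channel $\mathcal{N}_2(X)=\bra{+_\phi}X\ket{+_\phi}\ketbra{1}{1}+\bra{-_\phi}X\ket{-_\phi}\ketbra{0}{0}$, relabelling the outcomes if necessary.
\end{itemize}
In the second case every diagonal state, and hence every passive state, gives equal outcome probabilities and is mapped to $\frac{\mathbb{I}}{2}\in\mathcal{P}_2$. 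So $\mathcal{N}_2$ preserves $\mathcal{P}_2$ and is free in the weakest sense adopted here. Meanwhile $\rho_a$ is mapped to $\omega_t$ with $t=|\bra{0}\rho_a\ket{1}|>0$.

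\emph{Inequality and main obstacle.} It remains to show $D(\omega_t\|\sigma^*)>D(\omega_0\|\sigma^*)$ for $t>0$. Write $D(\omega_t\|\sigma^*)=-S(\omega_t)-\Tr\omega_t\log\sigma^*$. The entropy term has zero derivative at $t=0$, while the cross term has derivative $\log 0.8-\log 0.2=2>0$. Since $t\mapsto D(\omega_t\|\sigma^*)$ is convex, a positive derivative at $0$ means it increases strictly for all $t>0$. Choosing $\epsilon$ small then gives $\mathcal{C}_{\text{activity}}(\rho_a)>\chi^*$.

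The delicate point is the choice of $\mathcal{N}_2$. Simply using $\mathrm{id}$ fails for slightly coherent states near $\mathrm{diag}(0.9,0.1)$, whose divergence from $\sigma^*$ stays below $\chi^*$. So the measure-and-prepare construction, and the verification that it maps $\mathcal{P}_2$ into itself, is the step I would check most carefully.
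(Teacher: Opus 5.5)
Your argument is correct, but it takes a different route from the paper's.

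The paper works case by case with explicit finite ensembles:
\begin{itemize}
\item For states on the $X$-axis it uses $\{\mathcal{I},\mathcal{Z},\mathcal{N}_0\}$ with weights $\{0.2,0.2,0.6\}$. This reproduces the passive average $\mathrm{diag}(0.8,0.2)$ while the output-entropy term drops to $0.4\,S(\rho_\mu)<0.4$.
\item General coherent states are first dephased onto the $X$-axis by $\mathcal{N}_{bp}$.
\item Inverted diagonal states use $\{\mathcal{N}_0,\mathcal{I}\}$ with weights $\{0.6,0.4\}$.
\end{itemize}
You instead perturb the passive optimum and reduce everything to one first-order criterion, $D(\mathcal{N}_2(\rho_a)\|\sigma^*)>\chi^*$. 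A phase-aligned measure-and-prepare channel then turns the coherent case into the inverted case, so only a single monotonicity check in $t$ remains.

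The paper's route gives explicit ensembles and explicit finite gaps, e.g.\ $0.4[1-S(\rho_\mu)]$, with no limit in $\epsilon$. Your route is more systematic, because the criterion is also necessary. For any ensemble $\{p_x,\omega_x\}$ with average $\bar\omega$,
\[\sum_x p_x D(\omega_x\|\bar\omega)=\sum_x p_x D(\omega_x\|\sigma^*)-D(\bar\omega\|\sigma^*)\le\max_x D(\omega_x\|\sigma^*).\]
So your test characterizes exactly when a state beats $\chi^*$. It is also the natural tool for the beyond-qubit claim the paper only asserts.

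Your phase alignment also covers states that the paper's argument, as written, misses. If $\bra{0}\rho\ket{1}$ is purely imaginary (Bloch vector with $r_x=0\neq r_y$), then $\mathcal{N}_{bp}(\rho)=\frac{\mathbb{I}}{2}$. The paper's ensemble then yields exactly $\chi^*$, not strictly more. Fixing this needs a free diagonal phase rotation first, which your choice of $\ket{\pm_\phi}$ already builds in.
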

\begin{proof}
    Let us begin with considering any two-level quantum system diagonal in the \(\{\ket{\pm}\}\) basis --- the eigenvectors of the Pauli operator \(\sigma_X\),
    \begin{align*}
    \rho_{\mu}=\mu\ketbra{+}{+} + (1-\mu) \ketbra{-}{-},\end{align*}
    where, \(\ket{\pm}=\frac{\ket{0}\pm\ket{1}}{\sqrt{2}}\) and \(0\leq\mu\leq 1\).
    For every such, \(\rho_{\mu}\) consider an ensemble of free operations \(\{\mathcal{M}_x\}\equiv\{\mathcal{I},\mathcal{Z},\mathcal{N}_0\}\) with probabilities \(\{p_x\}=\{0.2,0.2,0.6\}\). Here \(\mathcal{I}\) and \(\mathcal{Z}\) are the channel representations of the unitary operators \(\mathbb{I}\) and \(\sigma_Z\) respectively and \(\mathcal{N}_0(\rho)=\ketbra{0}{0},~\forall \rho\in\mathcal{D}(\mathcal{H}_2)\). 
    
    Now, with the specific free encoding ensemble \(\{p_x,\mathcal{M}_x\}\) the FEC of any \(\rho_{\mu}\) becomes
    \begin{align}\label{e8}
        \nonumber\mathcal{C}_{\text{activity}}^*(\rho_{\mu},\{p_x,\mathcal{M}_x\})&=S(\overline{\rho_{\mu}})-\sum_xp_xS(\mathcal{M}_x(\rho_{\mu}))
        \\\nonumber&=S(\overline{\rho_{\mu}})-2\times 0.2\times S(\rho_{\mu})
        \\\nonumber&\geq S(\overline{\rho_{\mu}})-0.4\\&\simeq 0.322,
    \end{align}
    where, the second equality follows from the fact that \(S(\rho)=S(\sigma_Z\rho\sigma_Z)\) and \(S(\ketbra{0}{0})=0\) and then the inequality uses \(S(\rho)\leq 1\) for every qubit state. Finally, we obtain the third equality by identifying
    \begin{align*}
        \overline{\rho_{\mu}}&=2\times0.2(\frac12\rho_{\mu}+\frac12 \sigma_Z\rho_{\mu}\sigma_Z)+0.6\ketbra{0}{0}\\&=0.4\frac{\mathbb{I}}2+0.6\ketbra{0}{0},
    \end{align*}
     that is, the optimal ensemble for all passive states (see Lemma \ref{l2}).
        \begin{figure}
    \centering
    \includegraphics[width=1.0\linewidth]{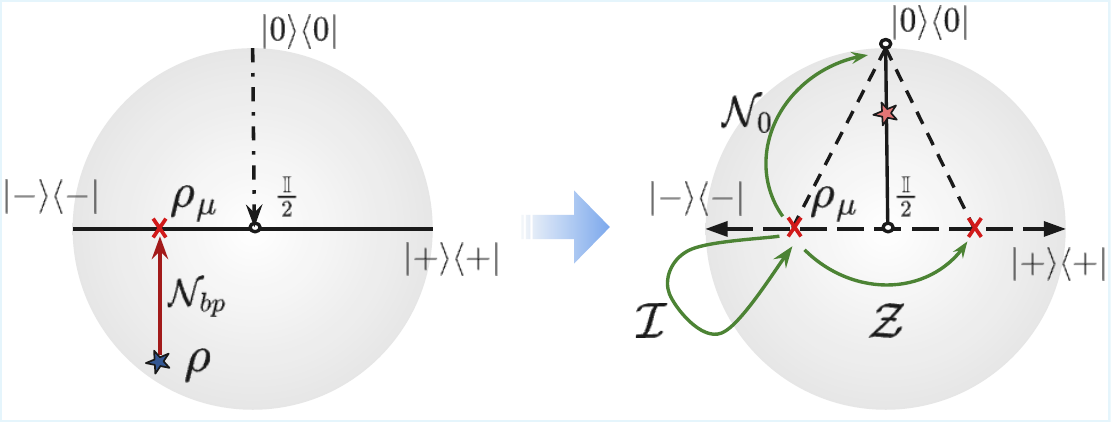}
    \caption{(Color online) \(\mathcal{C}^*_{\text{activity}}\)\textit{ for qubit states coherent in energy eigen basis.} The leftmost image shows that the bit-flip operations \(\mathcal{N}_{bp}\) dephase any quantum state \(\rho\) on the respective version \(\rho_{\mu}\) state on the X-axis, and specifically every passive state to the maximally mixed \(\frac{\mathbb{I}}2\). Then by applying three free operations \(\{\mathcal{I},\mathcal{Z},\mathcal{N}_0\}\), with probabilities \(\{0.2,0.2,0.6\}\) respectively, Alice effectively communicates the state \(\overline{\rho_{\mu}}\), denoted as a \textit{star} symbol. Note that the star is in the same position as in Fig \ref{f2} (a).}
    \label{f1}
\end{figure}

     Note that the inequality in Eq. (\ref{e8}) saturates only for \(\mu=0.5\), i.e., \(\rho_{0.5}=\frac{\mathbb{I}}2\), which is the only passive state of the form \(\rho_{\mu}\). We can therefore conclude 
     \[\mathcal{C}_{\text{activity}}(\rho_{\mu})\geq\mathcal{C}_{\text{activity}}^*(\rho_{\mu},\{p_x,\mathcal{M}_x\})>\mathcal{C}_{\text{activity}}(\rho_{\text{passive}}).\]

    Let us now consider the qubit bit flip channel, 
    \[\mathcal{N}_{bp}(\rho)=\frac 12(\rho + \sigma_X\rho \sigma_X).\] 

    Although the operation \(\mathcal{N}_{bp}\) may increase the amount of ergotropy, it is indeed a passivity-preserving operation. In fact, for any quantum state \(\rho_{\text{diag.}}\) diagonal in the energy eigenbasis (which trivially includes \(\mathcal{P}_2\)), 
    \[\mathcal{N}_{bp}(\rho_{\text{diag.}})=\frac{\mathbb{I}}2\in\mathcal{P}_2.\]
    Moreover, under this channel action every qubit state gets mapped on the X-axis of the Bloch sphere, that is, decohered in the \(\sigma_X\) basis. 
 
    Now, for any arbitrary qubit state \(\rho_{\text{coh}}\) possessing coherence in the \(\sigma_Z\)-basis, one can apply the passivity-preserving operation \(\mathcal{N}_{bp}\) first and then follow the encoding ensemble \(\{p_x,\mathcal{M}_x\}\) to communicate classical information. In other words,
    the effective free encoding ensemble will be of the form \(\{p_x,\mathcal{M}_x\circ \mathcal{N}_{bp}\}\) (see Fig \ref{f1}). Since the composition of any set of free operations is also free, the composed operations \(\mathcal{M}_x\circ \mathcal{N}_{bp}\) are also passivity-preserving in nature. Therefore, 
    \begin{align*}\mathcal{C}_{\text{activity}}(\rho_{\text{coh}})&\geq \mathcal{C}^*_{\text{activity}}(\rho_{\text{coh}},\{p_x,\mathcal{M}_x\circ \mathcal{N}_{bp}\})\\&>\mathcal{C}_{\text{activity}}(\rho_{\text{passive}}).\end{align*}

\begin{figure}[t]
    \centering
    \includegraphics[width=0.75\linewidth]{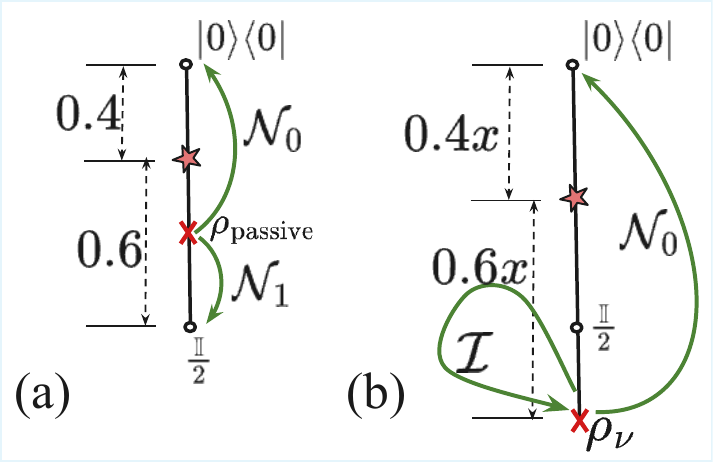}
    \caption{(Color online) \(\mathcal{C}^*_{\text{activity}}\)\textit{ for qubit states incoherent in energy eigen basis.} (a) Any passive state \(\rho_{\text{passive}}\) achieves its FEC with two free operations \(\{\mathcal{N}_0,\mathcal{N}_1\}\), respectively with probabilities \(\{0.6,0.4\}\). This produces an effective state denoted as a \textit{star} symbol on the \(Z\)-axis. (b) For any active state incoherent in the energy eigenbasis, Alice adopts two free operations \(\{\mathcal{N}_0,\mathcal{I}\}\), again with probabilities \(\{0.6, 0.4\}\). This effectively produces the state, denoted by a \textit{star} symbol. Since the distance \(x\) between the states \(\rho_{\nu}\) and \(\ketbra{0}{0}\) is greater than \(1\), it is trivial that the star state in figure (b) possesses higher entropy than that of figure (a).}
    \label{f2}
\end{figure}
    Finally, we will consider the active states, which are incoherent in nature, that is,
    \[\rho_{\nu}=\nu \ketbra{0}{0}+(1-\nu)\ketbra{1}{1},\text{with } 0\leq\nu< \frac 12,\]
    along with a free encoding ensemble \(\{\mathcal{N}_0,\mathcal{I}\}\), with a probability \(\{0.6,0.4\}\) (see Fig \ref{f2}(b)). Accordingly, the FEC of the state \(\rho_{\nu}\) with such an ensemble of free encoding will become
    \[\mathcal{C}_{\text{activity}}^*(\rho_{\nu},\{\{0.6,0.4\},\{\mathcal{N}_0,\mathcal{I}\}\})=S(\overline{\rho_{\nu}})-0.4\times S(\rho_{\nu}),\]
    where, \[\overline{\rho_{\nu}}=0.6\ketbra{0}{0}+0.4\rho_{\nu}=(0.6+0.4\nu)\ketbra{0}{0}+0.4(1-\nu)\ketbra{1}{1}.\]
    Note that, \(S(\overline{\rho_{\nu}})>S(\overline{\rho_{\frac12}})\) and \(S(\rho_{\nu})< S(\rho_{\frac 12})\) for every values of \(\nu\in[0,\frac 12)\). Therefore, 
    \begin{align*}
    \mathcal{C}_{\text{activity}}(\rho_{\nu})&\geq\mathcal{C}_{\text{activity}}^*(\rho_{\nu},\{\{0.6,0.4\},\{\mathcal{N}_0,\mathcal{I}\}\})\\&>S(0.8\ketbra{0}{0}+0.2\ketbra{1}{1})-0.4\\&=\mathcal{C}_{\text{activity}}(\rho_{\text{passive}}).
    \end{align*}
    This completes the proof.
\end{proof}
With the help of Lemma \ref{l2} and Proposition \ref{p7}, we can then conclude that
the quantity \[\tilde{\mathcal{C}}_{\text{activity}}(\rho)=\mathcal{C}_{\text{activity}}(\rho)-0.322\] can be used as a faithful quantifier for all active qubit states.
\end{document}